\newtheorem{theorem}{Theorem} 
\newtheorem{corollary}[theorem]{Corollary}
\newtheorem{lemma}{Lemma}
\newtheorem{claim}{Claim}
\newtheorem{definition}{Definition}
\newtheorem*{conjecture*}{Conjecture}
\newtheoremstyle{nonindented}{1ex}{1ex}{}{}{\bfseries}{.}{.5em}{}
\newtheoremstyle{indented}{1ex}{1ex}{\itshape\addtolength{\leftskip}{0.6cm}\addtolength{\rightskip}{0.6cm}}{}{\bfseries}{.}{.5em}{}
\theoremstyle{nonindented}
\theoremstyle{indented}
\theoremstyle{plain}
\renewcommand{\tilde}{\widetilde}
\renewcommand{\bar}{\overline}
\DeclareMathOperator*{\argmax}{arg\,max}
\def\pr{\qopname\relax n{Pr}}
\def\min{\qopname\relax n{min}}
\def\max{\qopname\relax n{max}}
\def\argmax{\qopname\relax n{argmax}}
\newenvironment{lp*}{\begin{equation*}  \begin{array}{lll}}{\end{array}\end{equation*}}
\title{Better Approximation for Interdependent SOS Valuations} 
\author{
Pinyan Lu$^1$~~~
Enze Sun$^2$~~~
Chenghan Zhou$^3$~~~
}
\affil{\small $^1$ ITCS, Shanghai University of Finance and Economics \\
$^2$  Department of Computer Science, The University of Hong Kong \\
$^3$ Department of Computer Science, Princeton University}
\date{\vspace{-6ex}}
\begin{document}

\maketitle

\begin{abstract}
Submodular over signal (SOS) defines a family of interesting functions for which there exist truthful mechanisms with constant approximation to the social welfare for agents with interdependent valuations. The best-known truthful auction is of $4$-approximation and a lower bound of $2$ was proved. We propose a new and simple truthful mechanism to achieve an approximation ratio of $3.315$.  
\end{abstract}

\section{Introduction}

In most study of auction theory, it is assumed that valuations are agents’ private information and they know their own values when they submit their bids to the auctioneer. However, this is not usually the case in real life. For example, when one buys a house or an art work by auction, his valuation largely depends on other’s valuations since they will impact the item’s resale value later on. When an advertiser bids an impression or click in internet, the value largely depends on that particular customer for which other bidders may have more information. To describe the valuation interdependence between different bidders, a model is proposed by Milgrom and Weber~\cite{milgrom1982competitive}. Each bidder $i$ holds some private information about the item, denoted by a signal $s_i\in \mathbb{R}^+$. Agent $i$'s valuation when receiving the item is a public-known function $v_i(\mathbf{s})$ that depends on the signals of all bidders. This has become the standard model for interdependent value settings (IDV) and has been studied in the economics literature for a few decades~\cite{Chawla2014approximate, Eden2018withoutsinglecrossing,Eden2019SOS,Dasgupta2000efficient,Roughgarden2016optimal,Eden2021poa,Takayuki2006contingent,Gkatzelis2021priorfree}.

For this model without any restriction on the valuation functions, it is impossible to design a truthful auction\footnote{The truthfulness notion here is ex-post IC \& IR rather than DSIC since DSIC is not possible for interdependent valuation} with good social welfare guarantee~\cite{Jehiel2001efficient}. This is in strong contrasts to the private valuation model, for which the VCG mechanism can achieve truthfulness and optimal welfare  simultaneously~\cite{Vickrey1961,Clarke1971,Groves1973,McLean2015implementation}. A natural extension of VCG mechanism only works when the valuation functions satisfy a technical condition called single-crossing condition~\cite{Ausubel2000vickrey,Athey2001singlecrossing,Vohra2011MechanismDA,Li2013approximation,Che2015EfficientAW}. However, there are many relevant settings where the single-crossing condition does not hold~\cite{Eden2018withoutsinglecrossing,Maskin1992privatization,Dasgupta2000efficient}. 

A different and beautiful perspective is proposed by Eden et al~\cite{Eden2019SOS}.  They introduced a new condition of the valuation functions called submodular over signal (SOS) property. Submodular captures a natural diminishing returns property, which is very common in economics settings. They designed a simple random sampling auction to achieve an approximation of 4 and proved that no truthful auction can do better than 2-approximation. 

This 4-approximation remains the state-of-art for general SOS settings.  The only improvement was made for the very special case of binary signal, where the signal for each agent only has two possible values. For this special binary signal setting, it is proved that there exists a tight 2-approximation auction~\cite{Amer2021ImprovedSOS}. It is an existential proof rather than an efficient design. To construct such a 2-approximation auction may need exponential time. 

\subsection*{Our contributions} 

Firstly, we generalize the random sampling auction in \cite{Eden2019SOS}. In \cite{Eden2019SOS}, they evenly divided the agents into two sets  and their analysis paired a set of agents with its complement to prove their approximation ratio. We generalize this to sampling with arbitrary probability $p$ and get a similar approximation ratio in terms of $p$.  Although the result is as expected, the proof is completely new since the original pairing trick does not work for biased sampling. We also get a more careful analysis of the approximation with a term involving the ratio of the largest and second largest values.  This part is easy but crucial to the final improvement of the mechanism. The observation is quite intuitive: the random sampling mechanism performs much better when the second largest value is comparable to the largest one. 

Secondly, we proposed a brand new auction. Our mechanism is very simple, it allocates the item to agent $i$ with the probability $$\frac{1}{2}\left(1-\frac{\max_{j \in [n], j \neq i} v_j(\mathbf{s}_{-i}, 0)}{\max_{j \in [n]} v_j(\mathbf{s})}\right).$$
We call our mechanism the contribution-based mechanism. The intuition and the meaning of the name will be discussed in Section 3.  
This mechanism is simple, efficient to implement and easy to verify the truthfulness since the allocation rule is monotone. However, the tricky part is to verify that it is indeed a well-defined mechanism, namely the overall allocation probability cannot exceed one. This proof crucially uses the property of SOS. In particular, we obtain a lemma from the SOS property which is the key of the proof. This lemma may be of independent interests. For example, the lemma is used in the analysis of the random sampling auction.  This contribution-based mechanism’s approximation ratio is at least $\frac{1}{2} \left( 1-\frac{v_{(2)}(\mathbf{s})}{v_{(1)}(\mathbf{s})} \right)$, where $v_{(1)}(\mathbf{s})$ and  $v_{(2)}(\mathbf{s})$ are the largest and the second largest values respectively given the signal profile $\mathbf{s}$. From this expression we can see that it achieves good ratio when the largest value is much larger than others. This is in the opposite direction with the random sampling mechanism. 

Finally, we run a convex combination of the above two mechanisms. Since the random sampling mechanism performs well when the second largest value is comparable to the largest one while  contribution-based mechanism performs well when the largest one is much larger than all other values, their combination achieves a good balance for all instances. The approximation of our final mechanism is $3.315$. This improves the previous 4-approximation mechanism for the first time. 

Besides the new auction, we also investigate the relation with SOS and strong-SOS, a stronger notion of SOS which was also introduced in~\cite{Eden2019SOS}. We build a reduction and prove that strong-SOS is as difficult as SOS in terms of approximation ratio for single item setting. 
This means that it is fine to design mechanisms for strong-SOS valuation only if it is easier since the mechanism can be transformed to a mechanism for general SOS valuations with almost same approximation ratio.  In \cite{Eden2019SOS}, a better approximation ratio was given for strong-SOS when the size of signal space is restricted. This does not contradict to our result since our reduction will enlarge the signal space greatly. 

\subsection*{Related works}
In this paper, we only focus on the canonical single item setting. The original paper~\cite{Eden2019SOS} studied SOS valuations in a much broader combinatorial auction setting. Their $4$-approximation works for any single-parameter downward-closed setting with single-dimensional signals and SOS valuations. They also studied multi-dimensional signal with separable SOS valuations and gave a $4$-approximation. We defined an extend version of SOS called $d$-SOS. We are not going to define all these extended versions but focus on single item and SOS valuation for simplicity. Interested readers can find these extensions in paper~\cite{Eden2019SOS}. The above mentioned $2$-approximation auction~\cite{Amer2021ImprovedSOS} can extend to systems with matroid constraints. 
A recent paper also studied single item setting but with private SOS valuations~\cite{Eden22private}. In our setting, the signals are private while the valuation functions are public.

\section{Preliminaries}

We consider a single-item auction with $n$ bidders. In the interdependent setting, each bidder $i$ holds some private information about the item, denoted by a signal $s_i\in \mathbb{R}^+$. The signals of all bidders participating in the auction can be collected as a vector $\mathbf{s} = (s_1, s_2, \cdots, s_n)$. We sometimes use $(\mathbf{s}_{-i}, s_i)$ to emphasize the signal of agent $i$. The signal space is denoted by $\mathcal{S}$. By convention, we assume that $s_i=0$ is the minimum signal in each agent $i$'s signal space.  

Agent $i$'s valuation when receiving the item is a public-known function $v_i(\mathbf{s})$ that depends on the signals of all bidders. By convention, we assume that $v_i(\mathbf{s})$ is non-negative, weakly increasing in all signals and strongly increasing in bidder $i$'s signal. 
$v_{(k)}(\mathbf{s})$ denotes the $k$th largest valuation of a single agent when the signals of all bidders is $\mathbf{s}$. 
We focus on the case where the valuation function for each agent is \emph{submodular over signals (SOS)}:

\begin{definition}[Submodularity over signals]
     A valuation function $v(\mathbf{s})$ is \emph{submodular over signals} if for all bidders $i$, $\mathbf{s}'_{-i} \succeq \mathbf{s}_{-i}$ and $s'_i\geq s_i$,
     $$v(\mathbf{s}'_{-i}, s'_i) - v(\mathbf{s}'_{-i}, s_i) \leq v(\mathbf{s}_{-i}, s'_i) - v(\mathbf{s}_{-i}, s_i).$$
\end{definition}

A mechanism $M=(\mathbf{x}, \mathbf{p})$ 
decides the allocation rule $\mathbf{x}$ and payment $\mathbf{p}$. Without loss of generality, we consider direct mechanisms where bidders report their private signals $\tilde{\mathbf{s}}$ as bids. The mechanism then allocates the item to bidder $i$ with probability $x_i(\tilde{\mathbf{s}})$ and asks for payment $p_i(\tilde{\mathbf{s}})$. $\mathbf{x}$ satisfies feasibility constraint $\sum_{i \in [n]} x_i(\mathbf{s}) \leq 1$ for all signal profiles $\mathbf{s}$.

Throughout our analysis, we adopt the solution concepts of \emph{ex-post IC \& IR} mechanisms, defined as follows:
\begin{definition}
     An ex-post incentive compatible (IC) mechanism means that each bidder does not regret reporting his private signal $s_i$ truthfully after knowing all the other bidders' reported signals $\mathbf{s}_{-i}$. Formally, let the signal profile be $\mathbf{s} = (s_i, \mathbf{s}_{-i})$. For all signals $s'_i$,
     $$x_i(\mathbf{s}_{-i}, s_i) v_i(\mathbf{s}) - p_i(\mathbf{s}_{-i}, s_i) \geq x_i(\mathbf{s}_{-i}, s'_i) v_i(\mathbf{s}) - p_i(\mathbf{s}_{-i}, s'_i)$$
     
     An ex-post individual rational (IR) mechanism satisfies 
     $$x_i(\mathbf{s}_{-i}, s_i) v_i(\mathbf{s}) - p_i(\mathbf{s}_{-i}, s_i) \geq 0$$
\end{definition}

In this paper, we mainly focus on the allocation rule since our goal is to maximize social welfare. The following characterization allows us to design allocation rule alone with an additional monotone constraint. The payment can be deviated from the allocation rule by the standard method and will be omitted in this paper~\cite{Roughgarden2016optimal}. 

\begin{lemma}
For an allocation rule $\mathbf{x}$, there exists a payment rule $\mathbf{p}$ to make the mechanism $M=(\mathbf{x}, \mathbf{p})$  ex-post IC \& IR iff it satisfies the following monotonicity: for any bidder $i$, $\mathbf{s}_{-i}$  and $s'_i>s_i$, we have  $x_i(\mathbf{s}_{-i}, s'_i)\ge x_i(\mathbf{s}_{-i}, s_i)$.
    
\end{lemma}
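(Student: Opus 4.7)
The statement is the interdependent-value analogue of Myerson's payment lemma, so the plan is to handle the two directions separately: necessity by a standard two-report swap, and sufficiency by writing down an explicit Myerson-style payment formula and verifying it by direct computation.

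For necessity, I fix bidder $i$ and a signal profile $\mathbf{s}_{-i}$ of the other agents, write the ex-post IC inequality for true signal $s_i$ deviating to $s_i'$, and also for true signal $s_i'$ deviating to $s_i$, then add the two. The payment terms cancel, leaving
\[
\bigl(x_i(\mathbf{s}_{-i}, s_i') - x_i(\mathbf{s}_{-i}, s_i)\bigr)\bigl(v_i(\mathbf{s}_{-i}, s_i') - v_i(\mathbf{s}_{-i}, s_i)\bigr) \geq 0.
\]
Since $v_i$ is strongly increasing in the $i$-th coordinate, the second factor has the same sign as $s_i' - s_i$, so $x_i(\mathbf{s}_{-i}, \cdot)$ must be weakly increasing in $s_i$.

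For sufficiency, I assume $x_i(\mathbf{s}_{-i}, \cdot)$ is monotone and propose the interdependent Myerson payment
\[
p_i(\mathbf{s}_{-i}, s_i) \;=\; x_i(\mathbf{s}_{-i}, s_i)\, v_i(\mathbf{s}_{-i}, s_i) \;-\; \int_0^{s_i} x_i(\mathbf{s}_{-i}, t)\, d v_i(\mathbf{s}_{-i}, t),
\]
interpreting the integral in the Lebesgue--Stieltjes sense against the monotone function $t \mapsto v_i(\mathbf{s}_{-i}, t)$. Plugging this into $U(r; s_i) = x_i(\mathbf{s}_{-i}, r) v_i(\mathbf{s}_{-i}, s_i) - p_i(\mathbf{s}_{-i}, r)$ and rewriting $v_i(\mathbf{s}_{-i}, r) - v_i(\mathbf{s}_{-i}, s_i) = \int_{s_i}^r d v_i$, the algebra for $r > s_i$ collapses to
\[
U(s_i; s_i) - U(r; s_i) \;=\; \int_{s_i}^r \bigl(x_i(\mathbf{s}_{-i}, r) - x_i(\mathbf{s}_{-i}, t)\bigr)\, d v_i(\mathbf{s}_{-i}, t),
\]
which is non-negative by monotonicity of $x_i$ and non-negativity of $d v_i$; the case $r < s_i$ is symmetric and produces a similar expression. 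Ex-post IR follows immediately, since $U(s_i; s_i) = \int_0^{s_i} x_i(\mathbf{s}_{-i}, t)\, d v_i(\mathbf{s}_{-i}, t) \geq 0$ as an integral of a non-negative integrand against a non-negative measure.

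Nothing in the argument is deep; the only place I would tread carefully is in handling an arbitrary monotone $x_i$ (possibly with jumps or flats) together with a $v_i$ that need not be smooth in $s_i$. This is precisely what the Lebesgue--Stieltjes framework accommodates, and all of the Stieltjes manipulations above go through under just the assumed monotonicity of both functions, so no differentiability hypothesis enters the proof.
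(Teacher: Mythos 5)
Your proof is correct and is exactly the standard Myerson-style argument (two-report swap for necessity, the envelope payment formula with a Lebesgue--Stieltjes integral for sufficiency) that the paper has in mind: the paper itself omits the proof and simply cites the standard reference for deriving payments from a monotone allocation rule in the interdependent-value setting. Your care in noting that the Stieltjes framework handles non-smooth $x_i$ and $v_i$ and that the ``strongly increasing'' hypothesis on $v_i$ in the $i$-th signal is exactly what lets the product inequality $\bigl(x_i(\mathbf{s}_{-i}, s_i') - x_i(\mathbf{s}_{-i}, s_i)\bigr)\bigl(v_i(\mathbf{s}_{-i}, s_i') - v_i(\mathbf{s}_{-i}, s_i)\bigr)\geq 0$ force monotonicity of $x_i$ is precisely right.
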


\section{Contribution-Based Mechanism}
 Let agent $i^*$ be the agent with the maximum valuation at signal $\mathbf{s}$. 
 The optimal social welfare is $v_{i^*}(s)$. However, we do not view that this social welfare is contributed by the agent $i^*$ alone since it also depends on other agents' signals. We view the contribution of agent $i\in[n]$ ( including $i^*$) as 
 $v_{i^*}(s)-\max_{j \in [n], j \neq i} v_j(\mathbf{s}_{-i}, 0)$, where $\max_{j \in [N], j \neq i} v_j(\mathbf{s}_{-i}, 0)$ is the optimal social welfare when agent $i$ is not in the game (his signal is "zeroed out" and his valuation is excluded). This difference of social welfare is the contribution brought by agent $i$ to the game. We want   to allocate the item to the agents proportional to their contributions. That is why we call our mechanism Contribution-Based Mechanism. The ideal allocation probability for agent $i$ should be 
  $\frac{v_{i^*}(s)-\max_{j \in [n], j \neq i} v_j(\mathbf{s}_{-i}, 0)}{v_{i^*}(s)}$. Unfortunately, this is not a valid mechanism since the total probability may exceed $1$. However, we are able to prove that the total probability will never exceed $2$ due to the SOS property of the functions. Therefor, we can half the probability and get a valid mechanism. 
  
\noindent \textbf{Contribution-Based Mechanism:}
  Let agent $i^*$ be the agent with maximum valuation at signal $\mathbf{s}$.
  For every agent $i$, we allocate the item to him with the probability 
  \[ x_i(\mathbf{s})= \frac{v_{i^*}(s)-\max_{j \in [n], j \neq i} v_j(\mathbf{s}_{-i}, 0)}{2 v_{i^*}(s)} \]

Before we prove that it is indeed a valid mechanism and analyse its approximation, we first prove an important property of the SOS functions. (This is also discovered in~\cite{Eden22private}.)

\begin{lemma}
\label{lemma:prev-value-deviation}
    Let $T \subseteq [n]$ be a subset of bidders. Signals $\mathbf{s}$ and $\mathbf{s}'$ satisfy $\forall t \in T, s'_t \leq s_t$ and $\mathbf{s}_{-T} = \mathbf{s}'_{-T}$. For each bidder $i \in [n]$, 
    $$ \sum_{t \in T} \left( v_i(\mathbf{s}) - v_i(\mathbf{s}_{-t}, s'_t) \right) \leq v_i(\mathbf{s}) $$
\end{lemma}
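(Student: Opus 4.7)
The plan is to prove the lemma via a telescoping argument driven by SOS. The intuition is that each individual difference $v_i(\mathbf{s}) - v_i(\mathbf{s}_{-t}, s'_t)$ measures the marginal loss in $v_i$ from lowering only coordinate $t$, starting from the \emph{highest} possible context $\mathbf{s}_{-t}$. By SOS, such marginal effects only grow when the context is smaller, so if we instead lower the signals one at a time in a fixed order, each successive marginal drop is at least as large as the corresponding term in our sum. The successive drops telescope into $v_i(\mathbf{s}) - v_i(\mathbf{s}')$, which is bounded above by $v_i(\mathbf{s})$ since $v_i \geq 0$.

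Concretely, I would fix an arbitrary ordering $T = \{t_1, t_2, \ldots, t_k\}$ and define a sequence of intermediate signal profiles $\mathbf{s}^{(0)}, \mathbf{s}^{(1)}, \ldots, \mathbf{s}^{(k)}$ by $\mathbf{s}^{(0)} = \mathbf{s}$ and, for $j \geq 1$, $\mathbf{s}^{(j)}$ is obtained from $\mathbf{s}^{(j-1)}$ by replacing the $t_j$-th coordinate by $s'_{t_j}$. Thus $\mathbf{s}^{(k)} = \mathbf{s}'$. The key step is to apply the SOS inequality at bidder $t_j$: because the other coordinates in $\mathbf{s}^{(j-1)}$ are coordinatewise at most those in $\mathbf{s}$ (we have only further lowered some of $t_1, \ldots, t_{j-1}$), and because $s_{t_j} \geq s'_{t_j}$, SOS gives
\[
v_i(\mathbf{s}) - v_i(\mathbf{s}_{-t_j}, s'_{t_j}) \;\leq\; v_i(\mathbf{s}^{(j-1)}) - v_i(\mathbf{s}^{(j)}).
\]
Summing this inequality over $j = 1, \ldots, k$ telescopes the right-hand side to $v_i(\mathbf{s}) - v_i(\mathbf{s}')$, and the non-negativity of $v_i$ then yields the claimed bound $\sum_{t \in T}\bigl(v_i(\mathbf{s}) - v_i(\mathbf{s}_{-t}, s'_t)\bigr) \leq v_i(\mathbf{s})$.

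The only real obstacle is applying SOS in the correct direction. One has to check carefully that $\mathbf{s}_{-t_j} \succeq \mathbf{s}^{(j-1)}_{-t_j}$ (it does, since the two agree outside of $\{t_1, \ldots, t_{j-1}\}$ and $\mathbf{s}^{(j-1)}$ has those coordinates at the smaller value $s'$), and that the SOS definition as stated in the paper — phrased for an increase from $s_i$ to $s'_i$ — matches our setting where we are comparing two losses incurred by lowering the same signal. Once the monotonicity of the context is in place, the rest is just the telescoping identity and $v_i(\mathbf{s}^{(k)}) = v_i(\mathbf{s}') \geq 0$.
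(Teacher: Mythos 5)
Your proof is correct and is essentially the same argument as the paper's: fix an ordering of $T$, apply the SOS inequality at bidder $t_j$ against the progressively lowered context $\mathbf{s}^{(j-1)}_{-t_j} \preceq \mathbf{s}_{-t_j}$, telescope, and invoke $v_i \geq 0$. The paper merely writes the intermediate profiles inline as $v_i(\mathbf{s}_{-\{t_1,\dots,t_{j-1}\}}, s'_{t_1},\dots,s'_{t_{j-1}})$ rather than introducing the notation $\mathbf{s}^{(j)}$, but the chain of inequalities is identical.
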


\begin{proof}
Denote $T = \{ t_1, t_2, \cdots, t_{|T|}\}$.

\begin{align*}
    &\sum_{j = 1}^{|T|} \left( v_i(\mathbf{s}) - v_i(\mathbf{s}_{-t_j}, s'_{t_j}) \right) \\
    \leq& \sum_{j=1}^{|T|} v_i \left( \mathbf{s}_{-\{t_1, \cdots, t_{j-1} \}}, s'_{t_1}, \cdots, s'_{t_{j-1}} \right) - v_i \left( \mathbf{s}_{-\{t_1, \cdots, t_{j} \}}, s'_{t_1}, \cdots, s'_{t_{j}} \right) \\
    =& v_i(\mathbf{s}) - v_i(\mathbf{s}') \\
    \leq& v_i(\mathbf{s})
\end{align*}

The first inequality comes from the SOS property.

\end{proof}


\begin{theorem}
    Contribution-Based Mechanism is an ex-post IC \& IR mechanism with $\frac{1}{2} \left( 1-\frac{v_{(2)}(\mathbf{s})}{v_{(1)}(\mathbf{s})} \right)$-approximation, where $v_{(1)}(\mathbf{s})$ and  $v_{(2)}(\mathbf{s})$ are the largest and the second largest values respectively given the signal profile $\mathbf{s}$ .
\end{theorem}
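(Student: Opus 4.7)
The plan is to verify three things in order: (i) the allocation rule is monotone in each agent's own signal (giving ex-post IC \& IR via the lemma already stated), (ii) the probabilities really sum to at most one, and (iii) the welfare lower bound matches the claimed ratio.

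For monotonicity, observe that in the expression $x_i(\mathbf{s}) = \frac{1}{2}\bigl(1 - \frac{M_i}{v_{i^*}(\mathbf{s})}\bigr)$ where $M_i := \max_{j \neq i} v_j(\mathbf{s}_{-i}, 0)$, the quantity $M_i$ depends only on $\mathbf{s}_{-i}$ (and not on $s_i$), while $v_{i^*}(\mathbf{s}) = \max_k v_k(\mathbf{s})$ is weakly increasing in $s_i$ by the monotonicity of each $v_k$. Hence $x_i(\mathbf{s}_{-i}, s_i)$ is weakly increasing in $s_i$, so the earlier characterization lemma gives ex-post IC \& IR.

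The main obstacle is feasibility, i.e.\ $\sum_i x_i(\mathbf{s}) \leq 1$. Rewriting the sum, this is equivalent to $\sum_i M_i \geq (n-2)\,v_{i^*}(\mathbf{s})$. The key idea is to invoke Lemma~\ref{lemma:prev-value-deviation} with the bidder being $i^*$ and $T = [n]\setminus\{i^*\}$, with $s'_t = 0$ for every $t \in T$. This yields
\[ \sum_{t \neq i^*} \bigl( v_{i^*}(\mathbf{s}) - v_{i^*}(\mathbf{s}_{-t}, 0) \bigr) \leq v_{i^*}(\mathbf{s}), \]
i.e.\ $\sum_{t \neq i^*} v_{i^*}(\mathbf{s}_{-t}, 0) \geq (n-2)\, v_{i^*}(\mathbf{s})$. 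Because $M_i = \max_{j \neq i} v_j(\mathbf{s}_{-i}, 0) \geq v_{i^*}(\mathbf{s}_{-i}, 0)$ for every $i \neq i^*$ and $M_{i^*} \geq 0$, we conclude $\sum_i M_i \geq (n-2)\, v_{i^*}(\mathbf{s})$, which is exactly feasibility. This is the step where SOS is essential: without it, the telescoping bound in Lemma~\ref{lemma:prev-value-deviation} fails.

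For the approximation ratio, it is enough to keep only the $i^*$ term in the expected welfare, since that bidder has the largest valuation. This gives
\[ \sum_i x_i(\mathbf{s}) v_i(\mathbf{s}) \geq x_{i^*}(\mathbf{s})\, v_{i^*}(\mathbf{s}) = \frac{v_{(1)}(\mathbf{s}) - M_{i^*}}{2}. \]
By monotonicity of each $v_j$ in the signal $s_{i^*}$, we have $v_j(\mathbf{s}_{-i^*}, 0) \leq v_j(\mathbf{s})$ for every $j \neq i^*$, so $M_{i^*} \leq \max_{j \neq i^*} v_j(\mathbf{s}) = v_{(2)}(\mathbf{s})$. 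Substituting and factoring out $v_{(1)}(\mathbf{s})$ gives the claimed ratio $\frac{1}{2}\bigl(1 - \frac{v_{(2)}(\mathbf{s})}{v_{(1)}(\mathbf{s})}\bigr)$ against the optimum $\mathrm{OPT} = v_{(1)}(\mathbf{s})$.
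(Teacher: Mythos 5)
Your proof is correct and follows essentially the same route as the paper's: you establish monotonicity from the fact that $M_i$ is independent of $s_i$ and $v_{i^*}(\mathbf{s})$ is weakly increasing, obtain feasibility by applying Lemma~\ref{lemma:prev-value-deviation} with $T = [n]\setminus\{i^*\}$ together with the bound $M_i \geq v_{i^*}(\mathbf{s}_{-i},0)$ for $i \neq i^*$, and lower-bound welfare by keeping only the $i^*$ term. The only cosmetic difference is that the paper proves feasibility by separately bounding $x_{i^*}\leq 1/2$ and $\sum_{i\neq i^*} x_i \leq 1/2$, whereas you rearrange the whole sum into $\sum_i M_i \geq (n-2)\,v_{i^*}(\mathbf{s})$ before invoking the same lemma; these are algebraically equivalent.
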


\begin{proof}
\textbf{Correctness:} We prove that $\sum_{i \in [n]} x_i(\mathbf{s}) \leq 1$. It is clear that $x_{i^*}(\mathbf{s}) \leq 1/2$. So we only need to  show that $\sum_{i \in [n], i \neq i^*} x_i(\mathbf{s}) \leq 1/2$. 

\begin{align*}
    \sum_{i \in [n], i \neq i^*} x_i(\mathbf{s}) 
    =& \sum_{i \in [n], i \neq i^*} \frac{v_{i^*}(\mathbf{s}) - \max_{j \in [n], j \neq i} v_j(\mathbf{s}_{-i}, 0)}{2 v_{i^*}(\mathbf{s})} \\
    \leq& \sum_{i \in [n], i \neq i^*} \frac{v_{i^*}(\mathbf{s}) - v_{i^*}(\mathbf{s}_{-i}, 0)}{2 v_{i^*}(\mathbf{s})} \\
    =& \frac{\sum_{i \in [n], i \neq i^*} (v_{i^*}(\mathbf{s}) - v_{i^*}(\mathbf{s}_{-i}, 0))} {2 v_{i^*}(\mathbf{s})} \\
    \leq& \frac{v_{i^*}(\mathbf{s})}{2v_{i^*}(\mathbf{s})} \\
    =& \frac{1}{2} 
\end{align*}

The second inequality comes from Lemma \ref{lemma:prev-value-deviation}.

\textbf{Monotonicity:} For any agent $i$, any $\mathbf{s}_{-i}$ and two signals $s'_i \geq s_i$, by the weak monotonicity of valuation functions, we have 
\[\max_{j \in [n]} v_j(\mathbf{s}_{-i}, s'_i) \geq  \max_{j \in [n]} v_j(\mathbf{s}_{-i}, s_i).  \]
As a result
\begin{align*}
x_i(\mathbf{s}_{-i}, s'_i) & = \frac{\max_{j \in [n]} v_j(\mathbf{s}_{-i}, s'_i)-\max_{j \in [n], j \neq i} v_j(\mathbf{s}_{-i}, 0)}{2 \max_{j \in [n]} v_j(\mathbf{s}_{-i}, s'_i)}\\
&\geq  \frac{\max_{j \in [n]} v_j(\mathbf{s}_{-i}, s_i)-\max_{j \in [n], j \neq i} v_j(\mathbf{s}_{-i}, 0)}{2 \max_{j \in [n]} v_j(\mathbf{s}_{-i}, s_i)}  \\
&=x_i(\mathbf{s}_{-i}, s_i)
\end{align*}

Therefore, our mechanism always assigns a higher probability  to agent $i$ when its signal is stronger. There exists a payment rule to make it ex-post IC and IR. 

\textbf{Approximation:} We simply verify that the agent $i^*$ get the item with the probability at least   $\frac{1}{2} \left( 1-\frac{v_{(2)}(\mathbf{s})}{v_{(1)}(\mathbf{s})} \right)$ since $v_{i^*}(\mathbf{s})$ is the targeted optimal social welfare. 
By definition, we have $v_{(1)}(\mathbf{s})=v_{i^*}(\mathbf{s})$ and 
\[v_{(2)}(\mathbf{s}) = \max_{j \in [n], j \neq i^*} v_j(\mathbf{s}) \geq \max_{j \in [n], j \neq i^*} v_j(\mathbf{s}_{-i^*}, 0).\]
Therefore, 

\[
    x_{i^*}(\mathbf{s}) 
    = \frac{1}{2} \left( 1 - \frac{\max_{i \in [n], i \neq i^*} v_i(\mathbf{s}_{-i^*}, 0)}{v_{i^*}(\mathbf{s})} \right) \geq \frac{1}{2} \left( 1 - \frac{v_{(2)}(\mathbf{s})}{v_{(1)}(\mathbf{s})} \right). 
\]

\end{proof}


\section{Random sampling Auction}
Our random sampling auction is a generalization of the auction in \cite{Eden2019SOS}. Their auction is a special case of ours by choosing $p=0.5$. 

\noindent \textbf{Random Sampling Mechanism} 
\begin{itemize}
    \item Each agent $i$ is allocated to set $A$ with probability $p$ and set $B$ with probability $1-p$.
    \item For $i \in B$, let $w_i = v_i(\mathbf{s}_A, s_i, \mathbf{0}_{B \backslash \{i\}})$.
    \item Allocate to bidder $\argmax_{i \in B}  w_i $
\end{itemize}

It is clear that this is an ex-post IC \& IR mechanism since the allocation rule is monotone.  
In \cite{Eden2019SOS}, they proved that the approximation ratio of the Random Sampling Mechanism is $1/4$ when $p=1/2$. We shall prove that the ratio is $p(1-p)$ for general $p$ and further refine the
ratio in terms of $\frac{v_{(2)}(\mathbf{s})}{v_{(1)}(\mathbf{s})}$.

\begin{theorem}\label{thm:random}
    For every signal $\mathbf{s}$, Random Sampling Mechanism is an ex-post IC \& IR mechanism with the approximation ratio of
    $$p(1-p)\left(1 + p \cdot \frac{v_{(2)}(\mathbf{s})}{v_{(1)}(\mathbf{s})}\right).$$
\end{theorem}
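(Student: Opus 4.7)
The plan is to lower bound the expected welfare by the contributions of the top two agents, each of which reduces to an ``expected submodular concavity'' estimate derived from SOS. Throughout, let $i^*$ and $j^*$ denote the agents attaining $v_{(1)}(\mathbf{s})$ and $v_{(2)}(\mathbf{s})$ respectively.

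First I would handle the easy parts. By monotonicity of $v_k$ in other agents' signals, $v_k(\mathbf{s}) \ge w_k$ whenever $k \in B$, so the realized welfare is at least $\max_{i \in B} w_i$. For ex-post IC \& IR: for any realized $(A,B)$ and any $\mathbf{s}_{-i}$, $w_i$ is strictly increasing in $s_i$ (strong monotonicity of $v_i$), while for $k \ne i$ in $B$ the quantity $w_k$ does not depend on $s_i$ since $s_i$ is zeroed out in the definition of $w_k$. Raising $s_i$ therefore only increases $i$'s chance of being the argmax, which yields monotonicity of the allocation rule and hence IC \& IR via the characterization in the preliminaries. Next, using the disjoint events $\{i^* \in B\}$ and $\{i^* \in A,\ j^* \in B\}$, I write
\[
\max_{i \in B} w_i \;\ge\; w_{i^*}\,\mathbf{1}[i^* \in B] \;+\; w_{j^*}\,\mathbf{1}[i^* \in A,\; j^* \in B],
\]
which reduces the theorem to lower-bounding the two expectations on the right.

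The technical heart is the following claim: for any agent $i$ and a random subset $A \subseteq [n] \setminus \{i\}$ whose elements are included independently with probability $p$, setting all coordinates outside $A \cup \{i\}$ to zero gives
\[
\mathbb{E}_A\!\left[\,v_i\big(\mathbf{s}_A,\, s_i,\, \mathbf{0}\big)\,\right] \;\ge\; p\,v_i(\mathbf{s}) + (1-p)\,v_i(s_i, \mathbf{0}) \;\ge\; p\,v_i(\mathbf{s}).
\]
SOS implies that the set function $f(A) := v_i(\mathbf{s}_A, s_i, \mathbf{0})$ is monotone and submodular in the set-function sense: turning ``on'' coordinate $j$'s signal has a weakly smaller marginal when more other coordinates are on. I would prove the displayed bound by induction on the number of free coordinates, conditioning on whether the last coordinate lies in $A$, applying the induction hypothesis to two submodular functions obtained on the remaining coordinates (one with the last coordinate forced on, one forced off), and closing the inductive step with the diamond inequality $f(S) + f(T) \ge f(S \cup T) + f(S \cap T)$. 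Applying the claim to $i^*$ yields $\mathbb{E}[w_{i^*} \mid i^* \in B] \ge p\,v_{(1)}$, contributing $p(1-p)\,v_{(1)}$ in expectation. Applying it to $j^*$ over $[n] \setminus \{i^*, j^*\}$ (the conditioning $\{i^* \in A,\ j^* \in B\}$ keeps $s_{i^*}$ retained and $s_{j^*}$ in place) gives $\mathbb{E}[w_{j^*} \mid i^* \in A,\ j^* \in B] \ge p\,v_{(2)}$, contributing $p^2(1-p)\,v_{(2)}$. Summing and dividing by the optimum $v_{(1)}$ yields the ratio $p(1-p)(1 + p\,v_{(2)}/v_{(1)})$.

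The main obstacle is the submodular concavity claim above. Lemma~\ref{lemma:prev-value-deviation}, taken on its own, produces only an \emph{upper} bound on $w_{i^*}(A)$ and so cannot substitute for this step; SOS must be translated into the set-function submodularity of $f$ and then combined with the Bernoulli-sampling concavity induction. Once that is in hand the two-agent decomposition and the probability bookkeeping are routine.
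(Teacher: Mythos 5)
Your proof is correct and follows the paper's decomposition exactly: the two disjoint events $\{i^* \in B\}$ and $\{i^* \in A,\ j^* \in B\}$, combined with the key estimate $\mathbb{E}_A[v_i(\mathbf{s}_A, s_i, \mathbf{0})] \ge p\,v_i(\mathbf{s})$, give the stated ratio by the same bookkeeping. Where you genuinely diverge is in the proof of that key estimate. The paper (Lemma~\ref{lemma:key_lemma}) reaches it combinatorially: Lemma~\ref{lemma:prev-value-deviation} is used in Lemma~\ref{ob} to show that the amortized average $\frac{1}{a\binom{n-1}{a}}\sum_{|A|=a} v_i(\mathbf{s}_A,\mathbf{0}_B,s_i)$ is non-increasing in $a$, and then a telescoping plus a resummation against the binomial distribution recovers $p\,v_i(\mathbf{s})$. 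You instead observe that $f(A):= v_i(\mathbf{s}_A,s_i,\mathbf{0})$ is a monotone submodular set function (SOS specialized to the zero-or-full axis) and prove the Bernoulli-sampling concavity $\mathbb{E}[f(A)] \ge p\,f([m]) + (1-p)\,f(\emptyset)$ by induction on the ground set, conditioning on the last coordinate and closing with the diamond inequality $f(\{m\})+f([m-1]) \ge f([m])+f(\emptyset)$. Your route is more conceptual and avoids the binomial coefficient manipulations entirely; it identifies Lemma~\ref{lemma:key_lemma} as an instance of a standard concavity fact for submodular set functions (and even gives the unused extra $(1-p)f(\emptyset)$ term). Two small remarks: your aside that Lemma~\ref{lemma:prev-value-deviation} ``only produces an upper bound'' and so cannot serve here is not quite right --- the paper does extract the needed lower bound from it, just indirectly via Lemma~\ref{ob}. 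And for the $j^*$ term, both you and the paper apply the key estimate under the conditioning $i^* \in A$; to be fully rigorous one should either note that this conditioning only increases $w_{j^*}$ by monotonicity, or apply the concavity bound directly to the contraction of $f$ with both $i^*$ and $j^*$ fixed on, which your induction handles without modification.
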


We start with the following lemma. 
\begin{lemma} 
    \label{ob}
    For any $a \in [n-2]$ and $i \in [n]$, 
    $$\frac{1}{a \cdot {n-1 \choose a}} \sum_{A: |A|=a} v_i(\mathbf{s}_A, \mathbf{0}_{B}, s_i) \geq \frac{1}{(a+1) \cdot {n-1 \choose a+1}} \sum_{A: |A|=a+1} v_i(\mathbf{s}_A, \mathbf{0}_{B}, s_i).$$
\end{lemma}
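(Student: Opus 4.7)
The plan is to reduce the inequality to a pointwise application of Lemma~\ref{lemma:prev-value-deviation} followed by a double-counting rearrangement. Writing $f(a) := \frac{1}{\binom{n-1}{a}}\sum_{A:|A|=a} v_i(\mathbf{s}_A, \mathbf{0}_{B}, s_i)$ for the average appearing on each side (here $B = [n]\setminus(A\cup\{i\})$), the claimed inequality is exactly $(a+1)f(a) \geq a\,f(a+1)$, i.e.\ the sequence $f(a)/a$ is nonincreasing in $a$.

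The first step is to apply Lemma~\ref{lemma:prev-value-deviation} fiberwise: for each fixed subset $A\subseteq [n]\setminus\{i\}$ with $|A|=a+1$, set $T=A$, take the base profile $\mathbf{s}=(\mathbf{s}_A,\mathbf{0}_{B},s_i)$, and let $s'_t=0$ for every $t\in T$. The lemma yields
\[
\sum_{t\in A}\bigl(v_i(\mathbf{s}_A,\mathbf{0}_B,s_i)-v_i(\mathbf{s}_{A\setminus\{t\}},\mathbf{0},s_i)\bigr)\;\le\;v_i(\mathbf{s}_A,\mathbf{0}_B,s_i),
\]
which rearranges (using $|A|=a+1$) to the pointwise bound
\[
a\cdot v_i(\mathbf{s}_A,\mathbf{0}_B,s_i)\;\le\;\sum_{t\in A}v_i(\mathbf{s}_{A\setminus\{t\}},\mathbf{0},s_i).
\]

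The second step is a standard change-of-summation. Summing the above over all $A$ with $|A|=a+1$ and reindexing the right-hand side via $A'=A\setminus\{t\}$, each $A'$ of size $a$ is produced exactly $|[n]\setminus(A'\cup\{i\})|=n-1-a$ times. This gives
\[
a\sum_{|A|=a+1}v_i(\mathbf{s}_A,\mathbf{0},s_i)\;\le\;(n-1-a)\sum_{|A'|=a}v_i(\mathbf{s}_{A'},\mathbf{0},s_i).
\]
Dividing by $(n-1-a)\binom{n-1}{a}$ and using the identity $\binom{n-1}{a+1}=\binom{n-1}{a}\cdot\frac{n-1-a}{a+1}$ converts the two sums into $f(a+1)$ and $f(a)$ with the desired coefficients, yielding $\frac{a}{a+1}f(a+1)\le f(a)$, which is the statement of the lemma. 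The hypothesis $a\in[n-2]$ is used precisely to ensure $n-1-a>0$ so that the division is legitimate and there is at least one $t$ available in the fiber summation.

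I expect no real obstacle here: the only non-mechanical step is recognizing that Lemma~\ref{lemma:prev-value-deviation}, although phrased as a telescoping SOS bound, specializes when all target signals are $0$ to exactly the ``pointwise concavity'' statement $a\cdot v_i(\mathbf{s}_A,\mathbf{0},s_i)\le\sum_{t\in A}v_i(\mathbf{s}_{A\setminus\{t\}},\mathbf{0},s_i)$ needed to turn the double-counting identity into the desired monotonicity of $f(a)/a$.
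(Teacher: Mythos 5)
Your proof is correct and is essentially the same as the paper's: both derive the pointwise bound $a\cdot v_i(\mathbf{s}_A,\mathbf{0}_B,s_i)\le\sum_{t\in A}v_i(\mathbf{s}_{A\setminus\{t\}},\mathbf{0}_{B\cup\{t\}},s_i)$ for each $(a{+}1)$-set $A$ from Lemma~\ref{lemma:prev-value-deviation}, combine it with the double-counting identity that each $a$-set extends to $n-1-a$ sets of size $a+1$, and finish with the binomial-coefficient cancellation. The only difference is presentational: the paper states the double-counting identity first and then applies the lemma, whereas you apply the lemma fiberwise first and then sum.
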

\begin{proof}
Before the formal proof, let us give some intuition. 
$\frac{1}{ {n-1 \choose a}} \sum_{A: |A|=a} v_i(\mathbf{s}_A, \mathbf{0}_{B}, s_i)$ is the expected value of $v_i(\mathbf{s}_A, \mathbf{0}_{B}, s_i)$ when $A$ is an uniform random set of size $a$. So, $\frac{1}{a \cdot {n-1 \choose a}} \sum_{A: |A|=a} v_i(\mathbf{s}_A, \mathbf{0}_{B}, s_i)$ can be viewed as an amortized expected value. This lemma says that this amortized expected value decreases in terms of the set size $a$. This is not surprising given the SOS property of the valuation function. 

Now we prove it formally. After canceling common factors in the binomial coefficients of both sides, the inequality is equivalent to the following one 
    $$ \sum_{A: |A|=a} v_i(\mathbf{s}_A, \mathbf{0}_{B}, s_i) \geq \frac{a}{n-1-a} \sum_{A: |A|=a+1} v_i(\mathbf{s}_A, \mathbf{0}_{B}, s_i).$$
We shall prove this inequality in the remaining of the proof. First we have
\begin{equation}\label{equ:1}
\sum_{A: |A| = a} v_i(\mathbf{s}_A, \mathbf{0}_B, s_i)
= \frac{1}{n-1-a}  \sum_{A: |A| = a+1} \sum_{j \in A} v_i(\mathbf{s}_{A \setminus \{j\}},\mathbf{0}_{B \cup \{j\}}, s_i).
\end{equation}
This identity holds since each term in the LHS is counted $n-1-a$ times in RHS. Every set of size $a$ can be extended to $n-1-a$ different sets of size $a+1$. 

By applying Lemma \ref{lemma:prev-value-deviation}, we get
\[\sum_{j \in A}(v_i(\mathbf{s}_{A}, \mathbf{0}_{B}, s_i)-  v_i(\mathbf{s}_{A \setminus \{j\}}, \mathbf{0}_{B \cup \{j\}}, s_i)) \leq v_i(\mathbf{s}_{A}, \mathbf{0}_{B}, s_i). \]
After rearranging the terms we get
\begin{equation}\label{equ:2}
  \sum_{j \in A} v_i(\mathbf{s}_{A \setminus \{j\}}, \mathbf{0}_{B \cup \{j\}}, s_i) \geq (|A|-1)v_i(\mathbf{s}_A, \mathbf{0}_B, s_i).
\end{equation}
Connecting \eqref{equ:1} and \eqref{equ:2}, we get
\[\sum_{A: |A| = a} v_i(\mathbf{s}_A, \mathbf{0}_B, s_i)
\geq \frac{a}{n-1-a}  \sum_{A: |A| = a+1} v_i(\mathbf{s}_A, \mathbf{0}_B, s_i)\]
This concludes the proof.
\end{proof}

We can keep applying this monotonicity lemma and bound all these summations by the value of $v_i(\mathbf{s})$. 

\begin{corollary}
For any $0 \leq a \leq n - 1$
$$\sum_{A: |A| = a} v_i(\mathbf{s}_A, \mathbf{0}_B, s_i) \geq \frac{a \cdot {n-1 \choose a}}{n-1} v_i(\mathbf{s})$$
\end{corollary}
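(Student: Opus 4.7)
The plan is to view Lemma~\ref{ob} as a monotonicity statement for the normalized quantity
\[
f(a) \;=\; \frac{1}{a\,\binom{n-1}{a}} \sum_{A:\,|A|=a} v_i(\mathbf{s}_A, \mathbf{0}_B, s_i),
\]
defined for $a \in \{1, 2, \ldots, n-1\}$, and then to read off the corollary by evaluating $f$ at the endpoint $a = n-1$.

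First, I would observe that Lemma~\ref{ob} is precisely the statement that $f(a) \geq f(a+1)$ for every $a \in \{1, \ldots, n-2\}$. Telescoping this chain of inequalities gives $f(a) \geq f(n-1)$ for every $a$ in the range $\{1, \ldots, n-1\}$. Next I would compute $f(n-1)$ directly: there is exactly one subset $A \subseteq [n] \setminus \{i\}$ of size $n-1$, namely $A = [n] \setminus \{i\}$, for which $(\mathbf{s}_A, \mathbf{0}_B, s_i) = \mathbf{s}$, so the sum collapses to $v_i(\mathbf{s})$ and
\[
f(n-1) \;=\; \frac{v_i(\mathbf{s})}{(n-1)\binom{n-1}{n-1}} \;=\; \frac{v_i(\mathbf{s})}{n-1}.
\]
Combining $f(a) \geq f(n-1)$ with this value and multiplying through by $a\binom{n-1}{a}$ yields the desired inequality for $a \geq 1$.

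Finally, the degenerate case $a = 0$ must be handled separately, since $f(0)$ is not defined (the factor $a$ in the denominator vanishes). But when $a = 0$ the right-hand side of the corollary is $0$, while the left-hand side is nonnegative (valuations are nonnegative), so the inequality holds trivially.

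There is really no serious obstacle here: Lemma~\ref{ob} was already the substantive step (the one invoking SOS via Lemma~\ref{lemma:prev-value-deviation}), and the corollary is a one-line telescoping consequence together with the evaluation at the top endpoint. The only thing to be careful about is the range of $a$ for which the normalization in $f$ makes sense, which is why the $a = 0$ case is treated as a separate trivial remark rather than absorbed into the telescoping argument.
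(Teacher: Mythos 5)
Your proposal is correct and takes essentially the same route as the paper: telescope the normalized inequality from Lemma~\ref{ob} down to the endpoint $a=n-1$, evaluate there, and clear denominators. You were slightly more careful than the paper's own write-up in explicitly flagging that $a=0$ falls outside the range where the normalization $\frac{1}{a\binom{n-1}{a}}$ makes sense and disposing of it by nonnegativity, a small edge case the paper's proof leaves implicit.
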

\begin{proof}

This inequality is trivial when $a=n-1$. For $a\in [n-2]$, we can keep using the above lemma to get the proof.  

\begin{align*}
    & \frac{1}{a \cdot {n-1 \choose a}} \sum_{A: |A| = a} v_i(\mathbf{s}_A, \mathbf{0}_B, s_i) \\
    \geq& \frac{1}{(a+1) \cdot {n-1 \choose a+1}} \sum_{A: |A| = a+1} v_i(\mathbf{s}_A, \mathbf{0}_B, s_i) \\
    \geq& \cdots \\
    \geq& \frac{1}{(n-1) \cdot {n-1 \choose n-1}} \sum_{A: |A| = n-1} v_i(\mathbf{s}_A, \mathbf{0}_B, s_i) \\
    =& \frac{1}{n-1} v_i(\mathbf{s}).
\end{align*}
\end{proof}

\begin{lemma}
\label{lemma:key_lemma}
     Let $A$ be a random subset of $[n] \setminus \{i\}$, where each bidder in $A$ is chosen with probability $p$, and let $B := ([n] \setminus \{i\}) \setminus A$. For any $\mathbf{s}$,
    $$\mathbb{E}_A[v_i(\mathbf{s}_A, \mathbf{0}_B, s_i)] \geq p \cdot v_i(\mathbf{s})$$
\end{lemma}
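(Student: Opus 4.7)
The plan is to condition on the size $|A|=a$ and reduce the claim to the corollary just proved. First I would expand the expectation by the distribution of $|A|$: since each of the $n-1$ agents in $[n]\setminus\{i\}$ is included independently with probability $p$,
\[
\mathbb{E}_A[v_i(\mathbf{s}_A, \mathbf{0}_B, s_i)]
= \sum_{a=0}^{n-1} p^a (1-p)^{n-1-a} \sum_{A:\, |A|=a} v_i(\mathbf{s}_A, \mathbf{0}_B, s_i).
\]
At this point I can simply plug in the corollary, which gives $\sum_{A:\,|A|=a} v_i(\mathbf{s}_A, \mathbf{0}_B, s_i) \ge \frac{a\binom{n-1}{a}}{n-1} v_i(\mathbf{s})$ for every $0 \le a \le n-1$.

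Substituting this lower bound term by term, I would pull $v_i(\mathbf{s})/(n-1)$ out of the sum and be left with $\sum_{a=0}^{n-1} a \binom{n-1}{a} p^a (1-p)^{n-1-a}$, which is just the expectation of a $\mathrm{Binomial}(n-1,p)$ random variable and hence equals $(n-1)p$. The two factors of $n-1$ cancel, yielding exactly $p\cdot v_i(\mathbf{s})$, as required.

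The only real content is the corollary, which already carries all the SOS work via Lemma~\ref{lemma:prev-value-deviation}; the rest is a binomial identity, so I do not anticipate any obstacle. The one thing worth being careful about is that the corollary's lower bound is stated uniformly in $a$ and remains meaningful at the endpoint $a=0$ (where both sides vanish) and at $a=n-1$ (where it reduces to $v_i(\mathbf{s})$), so the term-by-term substitution is valid over the entire range of the sum.
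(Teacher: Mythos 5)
Your proof is correct and follows essentially the same route as the paper: expand the expectation by $|A|$, apply the corollary term-by-term, and reduce to a binomial identity (the paper rewrites $a\binom{n-1}{a} = (n-1)\binom{n-2}{a-1}$ and invokes the binomial theorem, whereas you equivalently identify the sum as the mean of a $\mathrm{Binomial}(n-1,p)$, and you cleanly include the $a=0$ term that the paper implicitly drops). One small nit: at $a=0$ the corollary's left-hand side is $v_i(\mathbf{0}_B, s_i) \ge 0$, not necessarily zero, but the inequality against the vanishing right-hand side still holds, so this does not affect the argument.
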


\begin{proof}
\begin{align*}
    &\mathbb{E}_A[v_i(\mathbf{s}_A, \mathbf{0}_B, s_i)] \\
    =& \sum_{A} p^{|A|} (1-p)^{|B|} \cdot v_i(\mathbf{s}_A, \mathbf{0}_B, s_i) \\
    =& \sum_{a=1}^{n-1} p^{a} (1-p)^{n-1-a} \cdot \sum_{A: |A|=a} v_i(\mathbf{s}_A, \mathbf{0}_B, s_i) \\
    \geq& \sum_{a=1}^{n-1} p^{a} (1-p)^{n-1-a} \cdot \frac{a}{n-1} {n-1 \choose a} v_i(\mathbf{s}) \\
    =& p \cdot v_i(\mathbf{s}) \cdot \sum_{a=1}^{n-1} p^{a-1} (1-p)^{n-1-a} {n-2 \choose a-1} \\
    =& p \cdot v_i(\mathbf{s}) \cdot (p + (1 - p))^{n-2} \\
    =& p \cdot v_i(\mathbf{s})
\end{align*}
\end{proof}

\noindent \textit{Proof of Theorem \ref{thm:random}.}
Without loss of generality, we assume that the agent 1 and 2 achieve the largest and the second largest values respectively given the signal profile $\mathbf{s}$. We calculate the social welfare of the auction from two disjoint events $1\in B$ and 
$2 \in B \land 1 \in A$. 


\begin{align*}
    &\mathbb{E} \left[ \max_{i \in B} w_i \right] \\
    \geq &\mathbb{E} \left[ \max_{i \in B} w_i \cdot \mathbf{1}_{1 \in B} \right] +\mathbb{E} \left[ \max_{i \in B} w_i \cdot \mathbf{1}_{2 \in B \land 1 \in A} \right]  \\
    =& \mathbb{E} \left[ w_1 \cdot \mathbf{1}_{1 \in B} \right] + \mathbb{E} \left[ w_2 \cdot \mathbf{1}_{2 \in B \land 1 \in A} \right]\\
    =& \mathbb{E} \left[ v_1(\mathbf{s}_1,\mathbf{s}_A,\mathbf{0}_{B_{-1}}) \mid 1 \in B \right]  \cdot \pr(1 \in B) + \mathbb{E} \left[ v_2(\mathbf{s}_2,\mathbf{s}_A,\mathbf{0}_{B_{-2}}) \mid 2 \in B \land 1 \in A \right]  \cdot \pr(2 \in B \land 1 \in A) \\
    \geq& p \cdot v_1(\mathbf{s}) \cdot (1-p) + p \cdot v_2(\mathbf{s}) \cdot p(1-p),
\end{align*}
where the last inequality uses Lemma \ref{lemma:key_lemma}. 

Since the optimal social welfare is $v_{(1)}(\mathbf{s})$, the approximation ratio of the random sampling mechanism is at least 
    $$p(1-p)\left(1 + p \cdot \frac{v_{(2)}(\mathbf{s})}{v_{(1)}(\mathbf{s})}\right).$$
\qed

\section{Mechanism}

\begin{theorem}
\label{thm:main}
For agents with SOS valuations, there is a polynomial time, ex-post IC \& IR  mechanism that gives 3.31543-approximation to the optimal welfare.
\end{theorem}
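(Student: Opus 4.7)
The plan is to run a convex combination of the two mechanisms already constructed: with probability $\lambda$ run the Contribution-Based Mechanism, and with probability $1-\lambda$ run the Random Sampling Mechanism with some parameter $p$. Both components have monotone allocation rules, and a convex combination of monotone rules is again monotone, so the combined mechanism admits a payment rule making it ex-post IC \& IR. Both components run in polynomial time, so the combination does too.

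For the approximation analysis, write $\alpha = v_{(2)}(\mathbf{s})/v_{(1)}(\mathbf{s}) \in [0,1]$. Combining the per-instance ratios already proved for the two components, the combined mechanism achieves approximation ratio at least
\[
f(\alpha) \;=\; \lambda \cdot \frac{1-\alpha}{2} \;+\; (1-\lambda)\,p(1-p)\bigl(1 + p\alpha\bigr)
\]
on every instance with that $\alpha$. The crucial observation is that $f$ is affine in $\alpha$, so its minimum over $\alpha \in [0,1]$ is attained at an endpoint:
\[
f(0) = \tfrac{\lambda}{2} + (1-\lambda)p(1-p), \qquad f(1) = (1-\lambda)\,p(1-p)(1+p).
\]
As $\lambda$ grows, $f(0)$ increases while $f(1)$ decreases. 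Hence for each fixed $p$ the optimal $\lambda$ equates these two endpoints, and substituting back yields the common value
\[
g(p) \;=\; \frac{p(1-p)(1+p)}{1 + 2p^2(1-p)}.
\]

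The final step is to maximize $g(p)$ over $p \in (0,1)$. This is a clean one-dimensional smooth optimization whose first-order condition reduces to a low-degree polynomial in $p$; solving numerically gives $p^{\ast} \approx 0.5406$ with $1/g(p^\ast) \approx 3.31543$, which matches the claimed ratio. The main (admittedly mild) obstacle is only finding the sweet spot for $p$: conceptually everything is clean because both component bounds are linear in $\alpha$, which collapses the worst case to two points, and the rest is elementary calculus. The intuition matches what was previewed in the introduction: the Contribution-Based Mechanism dominates when $\alpha$ is small while Random Sampling dominates when $\alpha$ is close to one, so the convex combination hedges against both extremes.
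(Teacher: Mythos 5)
Your proposal is correct and takes essentially the same approach as the paper: run the Contribution-Based Mechanism with probability $q$ (your $\lambda$) and Random Sampling with probability $1-q$, observe that the combined bound is affine in $\alpha = v_{(2)}(\mathbf{s})/v_{(1)}(\mathbf{s})$, pick $q$ to equalize the two endpoints (equivalently, to zero the $\alpha$-coefficient, which is how the paper phrases it), and optimize the resulting $g(p)=\frac{p(1-p^2)}{1+2p^2(1-p)}$ over $p$, whose first-order condition is exactly $2p^4-4p^3+5p^2-1=0$ with root $p\approx 0.5406$ and $1/g(p^\ast)\approx 3.31543$.
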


\begin{proof}
The final mechanism is a convex combination of the above two mechanisms: run the contribution-base mechanism with probability $q$ and the random sampling mechanism with probability $1-q$. We note that the sampling probability $p$ within the random sampling mechanism and this combination probability $q$ are two parameters of the mechanism  to be fixed later.  

It is obvious that this mechanism is polynomial time and ex-post IC \& IR since both contribution base mechanism and random sampling mechanism are. 

The approximation ratio is just the convex combination of the two mechanisms. 
\begin{align*}
& \frac{1}{2} \left( 1-\frac{v_{(2)}(\mathbf{s})}{v_{(1)}(\mathbf{s})} \right)q + p(1-p)\left(1 + p \cdot \frac{v_{(2)}(\mathbf{s})}{v_{(1)}(\mathbf{s})}\right) (1-q) \\
=& \frac{q}{2} + p(1-p)(1-q) +  \left(p^2(1-p)(1-q)-\frac{q}{2}\right) \frac{v_{(2)}(\mathbf{s})}{v_{(1)}(\mathbf{s})}.
\end{align*}

By choosing 
$q= \frac{2p^2(1-p)}{1+2 p^2(1-p)}$,
The coefficient of  $\frac{v_{(2)}(\mathbf{s})}{v_{(1)}(\mathbf{s})}$ vanishes since
$p^2(1-p)(1-q)-\frac{q}{2}=0$. The approximation ratio is then 
\[\frac{q}{2} + p(1-p)(1-q) =\frac{p(1-p^2)}{1+2 p^2(1-p)}. \]

 Let $p$ be the non-negative real solution of $2x^4-4x^3+5x^2-1=0$ ($p \approx 0.54056$  and thus $q \approx 0.21167$), 
we get the final approximation ratio $0.30162=\frac{1}{3.31543}$.
This concludes the proof of our main result. 

\end{proof}

\section{Strong-SOS}
In \cite{Eden2019SOS}, a stronger notion called Strong-SOS was also proposed.  

\begin{definition}[Strong-SOS]
     A valuation function $v(\mathbf{s})$ is \emph{strong submodular over signals} if for all bidders $i$, $\mathbf{s}'\succeq \mathbf{s}$ and $\delta\geq 0$,
     $$v(\mathbf{s}'_{-i}, s'_i+\delta) - v(\mathbf{s}'_{-i}, s'_i) \leq v(\mathbf{s}_{-i}, s_i+\delta) - v(\mathbf{s}_{-i}, s_i).$$
\end{definition}

Although these two definitions look similar and seem to only differ in a small technical condition, we shall argue that the concept of SOS is much more natural and robust than strong-SOS.  Signal is an abstract of some private information of the agents and the number represents the strength of the signal. In many cases, only the relative order of different signals rather than their concrete numbers matter since different representations of the signal may have complete different numbers. In particular, if there is a monotone mapping $\phi_i: \mathcal{S}_i\rightarrow \mathcal{S}'_i$ to change one representation of the signals to another, this should not change the problem at all since a mechanism for one representation can be directly transformed to the other with exactly the same performance and behavior. The property of SOS is also invariant for different representations: the valuation is SOS before the monotone mapping iff it is SOS after the mapping. This is desirable and shows the robustness of the definition.  
However, this invariant does not hold for the definition of strong-SOS. As a result, strong-SOS is not a property for the valuation function alone but a property for valuation function combined with a particular representation of the signal space.  

Another advantage for the definition of SOS is that it does not require any additional structure or property in the signal space other than the ordering structure. For strong-SOS, it requires an additional metric structure so that we can define addition. Furthermore, it requires the space to be continuous such as an interval of real numbers or integer numbers, otherwise it may trivialize the definition.  For example, if the space contains four numbers $\mathcal{S}_i=\{0,1,3,7\}$, there does not exists any $s'_i\neq s_i\in \mathcal{S}_i$ and $\delta\neq 0$ such that both $s'_i+\delta$ and $s_i+\delta$ are in  $\mathcal{S}_i$. When  
$\delta=0$, the condition in the strong-SOS property is trivial; when $s'_i= s_i$, the property degenerates to SOS. 

The above observation says that the property of strong-SOS crucially depends on the property of the signal space. So, it may not be that robust and widely applicable. In the following, we argue that it is not that special either. The informal statement is that for any SOS valuation, there exists a monotone mapping of the signal space such that it becomes strong-SOS after the mapping. The take away here is that one may abandon the concept of strong-SOS and focus mainly on SOS. On the other hand, we can also interpret it positively: one can make use of the strong-SOS property freely when designing mechanisms if it is helpful. Then the mechanism can be transformed to general SOS functions.     

\begin{theorem}
    If there exists a mechanism with $\alpha$-approximation for strong-SOS valuations, then there exists one with $(1-O(\epsilon))  \alpha$-approximation for general SOS valuations with finite discrete signal spaces. 
\end{theorem}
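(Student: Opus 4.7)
The plan is to reduce SOS to strong-SOS through a monotone embedding of each agent's signal space into a much larger finite discrete space on which the valuation admits a strong-SOS extension that matches the original up to a $(1+\epsilon)$ factor at image points. Concretely, I will construct a monotone embedding $\phi:\mathcal{S}\hookrightarrow\tilde{\mathcal{S}}$ and a strong-SOS valuation $\tilde v$ on $\tilde{\mathcal{S}}$ with $v_i(\mathbf{s})\le\tilde v_i(\phi(\mathbf{s}))\le(1+\epsilon)v_i(\mathbf{s})$ at every image point. The new SOS mechanism maps reported signals $\mathbf{s}$ to $\phi(\mathbf{s})$, runs the assumed strong-SOS $\alpha$-approximation mechanism $\tilde M$, and uses its allocation. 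Monotonicity of $\phi$ makes the induced allocation monotone in each $s_i$, so Lemma~1 yields ex-post IC \& IR; and combining the strong-SOS guarantee with $\tilde v_i\le(1+\epsilon)v_i$ at image points gives welfare $\sum_i\tilde x_i\,v_i(\mathbf{s})\ge\frac{\alpha}{1+\epsilon}\max_j v_j(\mathbf{s})=(1-O(\epsilon))\alpha\cdot\mathrm{OPT}$.

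The construction of $(\phi,\tilde v)$ has two stages. First, perturb $v$ to $v'_i(\mathbf{s}):=v_i(\mathbf{s})+\epsilon\,\underline v_i(s_i)$ with $\underline v_i(s_i):=v_i(\mathbf{0}_{-i},s_i)$. The added term depends only on $s_i$, so it cancels in every cross-difference and $v'$ remains SOS; strong monotonicity of $v_i$ in $s_i$ passes to $\underline v_i$, so every $s_i$-marginal of $v'_i$ is bounded below by some $\epsilon\delta>0$; and $\underline v_i\le v_i$ gives $v_i\le v'_i\le(1+\epsilon)v_i$. Second, let $V:=\max_{i,\mathbf{s}}v_i(\mathbf{s})$, choose an integer $C\ge V(1+\epsilon)/(\epsilon\delta)$, set $\tilde{\mathcal{S}}_i:=\{0,1,\ldots,C^{k_i}\}$, embed $\phi_i(s_i^{(j)}):=C^j$, and define $\tilde v_i(\tilde{\mathbf{s}})$ as the multilinear interpolation of $v'_i$ within the axis-aligned cell containing $\tilde{\mathbf{s}}$; each such cell has $2^n$ corners, all of which are image points of $\phi$.

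The main verification is that $\tilde v$ is strong-SOS. I will use the decomposition strong-SOS $\iff$ SOS $+$ concavity of each $\tilde v_i(\tilde{\mathbf{s}}_{-i},\cdot)$ in $\tilde s_i$, which follows by splitting the strong-SOS inequality into a pure-SOS step (at common own-signal) and a pure-concavity step (at common others'-signals) and chaining them. Own-signal concavity holds because $\tilde v_i(\tilde{\mathbf{s}}_{-i},\cdot)$ is piecewise linear with slope on $[C^r,C^{r+1}]$ equal to the multilinear interpolation of the consecutive $s_i$-marginal of $v'_i$ divided by $C^r(C-1)$; the ratio of consecutive slopes is at most $1/C$ times the corresponding marginal ratio, hence $\le 1$ by the choice of $C$ and the non-negativity of interpolation coefficients. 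SOS of $\tilde v$ follows because multilinear interpolation preserves lattice-submodularity: within each cell this is the standard bilinear computation, and globally the submodularity defect $\tilde v(\tilde{\mathbf{s}})+\tilde v(\tilde{\mathbf{s}}')-\tilde v(\tilde{\mathbf{s}}\vee\tilde{\mathbf{s}}')-\tilde v(\tilde{\mathbf{s}}\wedge\tilde{\mathbf{s}}')$ decomposes as a non-negative combination of grid-corner lattice-submodularity inequalities of $v'$.

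The main obstacle is the cross-cell bookkeeping for lattice-submodularity: when $\tilde{\mathbf{s}}$ and $\tilde{\mathbf{s}}'$ lie in different axis-aligned cells, each of the four terms in the submodularity defect uses a different corner set, and a careful combinatorial argument is needed to express the defect as a non-negative combination of corner-level submodularity inequalities. The enlargement $|\tilde{\mathcal{S}}_i|=C^{k_i}+1$ is exponential in $k_i$ and $1/\epsilon$, so this is an existential reduction only, consistent with the paper's remark that it greatly enlarges the signal space and does not contradict the known restricted-signal-space improvements for strong-SOS.
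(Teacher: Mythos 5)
Your overall blueprint---an exponential monotone embedding of each coordinate followed by multilinear interpolation to produce a strong-SOS extension, then pulling the mechanism back---is the same as the paper's, but the perturbation you chose is wrong and the argument as written does not establish strong-SOS. The strong-SOS definition quantifies over \emph{all} coordinate indices $i$, not only the owner of the valuation: a valuation $v$ (say agent $j$'s) is strong-SOS only if, for \emph{every} coordinate $i$ (including $i\neq j$), the increment $v(\mathbf{s}_{-i}, s_i+\delta) - v(\mathbf{s}_{-i}, s_i)$ is non-increasing in $\mathbf{s}$. After exponential stretching, this translates to concavity of the interpolant along \emph{every} axis, not just the owner's. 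Your reduction ``strong-SOS $\iff$ SOS $+$ own-signal concavity'' therefore mis-states the target, and your perturbation $v'_i(\mathbf{s}) := v_i(\mathbf{s}) + \epsilon\, v_i(\mathbf{0}_{-i}, s_i)$ depends only on $s_i$: it leaves the $s_j$-marginals of $v_i$ unchanged for $j\neq i$, and under weak monotonicity these can be zero. SOS alone puts no constraint on how the $s_j$-increment of $v_i$ (with $\mathbf{s}_{-j}$ fixed) varies with $s_j$, so these increments may alternate $0,1,0,1,\dots$; after dividing by geometrically growing cell widths, the slopes along the $s_j$-axis become $0,\,c_1,\,0,\,c_2,\dots$ with $c_1,c_2>0$, which is not non-increasing, so concavity on that axis fails no matter how large $C$ is. The paper's perturbation $\epsilon\|\mathbf{s}\|_1$ bumps \emph{every} coordinate increment by at least $\epsilon$ and is essential precisely here, not a cosmetic choice. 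With that substitution, your construction would align with the paper's.

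A secondary gap: your SOS-preservation step for the piecewise-multilinear interpolant, particularly the cross-cell case, is asserted but not proved, and you flag it yourself as ``the main obstacle.'' The claim is true, but it is not free; the paper derives it from a one-coordinate interpolation inequality (its Claim~1) applied coordinate-by-coordinate together with a telescoping chain, and your sketch stops short of that argument.
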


We only prove the result for finite discrete signal spaces for simplicity. We believe that it also holds for continuous space (maybe under some smoothness condition for the valuation functions such as Lipschitz condition).   
The detailed formal proof below is not that informative since most of the technical effort is to deal with the oddness for the definition of strong-SOS. The high level idea is simple: just find a mapping. As long as this mapping grows very fast (so we choose  exponential functions here), it becomes strong-SOS.   
However, the space is not continuous after the mapping. We need to fill the holes, we use convex combination to fill the holes. 

\begin{proof}
    Assume that in the auction, each bidder's valuation function $v_i$ is SOS. First, we convert the SOS valuation functions $\{v_i\}_{i \in [n]}$ into a new set of strong-SOS valuation functions $\{\bar{v}_i\}_{i \in [n]}$. Based on the results of $\bar{\mathbf{x}}$ on $\{\bar{v}_i\}_{i \in [n]}$, we construct a new mechanism $\mathcal{M}$ that achieves $\alpha(1-O(\epsilon))$-approximation for the original SOS valuation functions $\{v_i(\mathbf{s})\}_{i \in [n]}$. 
    
    We first show the construction for $\{ \bar{v}_i \}_{i \in [n]}$.
    For finite discrete signal spaces $\mathcal{S}_i$, it is without loss of generality to assume that they are simply consecutive integers starting from zero.  
    Our idea is to extend the original signal space of each bidder $\mathcal{S}_i$ to $\bar{\mathcal{S}}_i = [\sum_{k=0}^{|\mathcal{S}_i|-1} c^k]$, where $c = \lceil \frac{\max_{\mathbf{s} \in \mathcal{S}} v_{(1)}(\mathbf{s})}{\epsilon} \rceil + 1$ correlates to the maximum valuation of a single bidder over all signals. The signal space of all bidders is thus $\bar{\mathcal{S}} = \prod_{i \in [n]} \bar{\mathcal{S}}_i$.
    Specifically, each signal $s_i \in \mathcal{S}_i$ is mapped to an exponential signal $c_{s_i} = \sum_{k = 1}^{s_i - 1} c^k \in 
    \bar{\mathcal{S}}$. We define the set $\{c_s = \sum_{k=1}^s c^k \}_{s \in [\max_{i \in [n]} |\mathcal{S}_i|]}$ as $\mathcal{C}$. Signal $\bar{s}_i \notin \mathcal{C}$ is a convex combination of its two closest signals in $\mathcal{C}$. Formally, for each signal $\bar{s}_i \in \bar{\mathcal{S}}_i$, let $\ell(\bar{s}_i) = \max_{s_i \in \mathcal{S}, c_{s_i} \leq \bar{s}_i} s_i$ and $\mathscr{r}(\bar{s}_i) = \min_{s_i \in \mathcal{S}, c_{s_i} \geq \bar{s}_i} s_i$ be the two closest integers smaller and larger than $\bar{s}_i$ separately.
    Notice that $\ell(\bar{s}_i) = \mathscr{r}(\bar{s}_i)$ when $\bar{s}_i \in \mathcal{C}$, and $\ell(\bar{s}_i) + 1 = \mathscr{r}(\bar{s}_i)$ otherwise. For the convenience of notation, we define this convex decomposition $\mu(\bar{s}_i): \bar{\mathcal{S}}_i \rightarrow \Delta_{\mathcal{S}_i}$ as a distribution where $P(\ell(\bar{s}_i))=\frac{c_{\ell(\bar{s}_i)+1} - \bar{s}_i}{c_{\ell(\bar{s}_i)+1} - c_{\ell(\bar{s}_i)}}$ and $P(\ell(\bar{s}_i)+1) = \frac{\bar{s}_i - c_{\ell(\bar{s}_i)}}{c_{\ell(\bar{s}_i)+1} - c_{\ell(\bar{s}_i)}}$. 
    The decomposition of a signal profile $\mu(\bar{\mathbf{s}})$ is the joint distribution over the decomposition of each bidder's signal $\bar{s}_i$, i.e., $\mu(\bar{\mathbf{s}}) = \prod_{i = 1}^n \mu(\bar{s}_i)$.
    
    We construct valuation function $\bar{v}_i(\bar{\mathbf{s}})$ as the expectation over $\mu(\bar{\mathbf{s}})$ plus a small number, defined as the following
    $$\bar{v}_i(\bar{\mathbf{s}}) = \mathbb{E}_{\mathbf{s} \sim \mu(\bar{\mathbf{s}})} \left[ v_i(\mathbf{s}) + \epsilon \cdot \lVert \mathbf{s} \rVert_1 \right]$$. 

\begin{lemma}
    For any $i \in [n]$, the constructed valuation function $\bar{v}_i: \bar{\mathcal{S}} \rightarrow \mathbb{R}$ is strong-SOS.
\end{lemma}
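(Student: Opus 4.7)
The plan is to prove strong-SOS by establishing that for each fixed $\bar{\mathbf{s}}_{-i}$, the function $\bar s_i \mapsto \bar v_i(\bar{\mathbf s}_{-i},\bar s_i)$ is piecewise linear with breakpoints on $\mathcal{C}$, and that its slope is jointly non-increasing in $\bar{\mathbf{s}}_{-i}$ and $\bar s_i$. Once established, the strong-SOS inequality follows by writing the marginal as $\int_{\bar s_i}^{\bar s_i+\delta} D_i^{\mathrm{loc}}(\bar{\mathbf s}_{-i},t)\,dt$, replacing $\bar{\mathbf s}'_{-i}$ by $\bar{\mathbf s}_{-i}$ inside the integrand (slope non-increasing in $\bar{\mathbf s}_{-i}$), and then shifting the integration window from $[\bar s'_i,\bar s'_i+\delta]$ back to $[\bar s_i,\bar s_i+\delta]$ (slope non-increasing in $t$ together with $\bar s'_i \ge \bar s_i$).

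The slope is computed on a ``box-piece'' where every coordinate $\bar s_j$ sits in a fixed interval $[c_{\ell_j},c_{\ell_j+1}]$: there $\mu(\bar{\mathbf s}) = \prod_j \mu(\bar s_j)$ is multilinear in $\bar{\mathbf s}$, so $\bar v_i$ is multilinear as well, and differentiating in $\bar s_i$ produces
\[
  D_i(\bar{\mathbf s}_{-i},\ell_i) = \frac{1}{c^{\ell_i+1}}\, \mathbb{E}_{\mathbf s_{-i} \sim \mu(\bar{\mathbf s}_{-i})} \bigl[ v_i(\mathbf s_{-i},\ell_i+1) - v_i(\mathbf s_{-i},\ell_i) + \epsilon \bigr],
\]
the $+\epsilon$ coming from the additive $\epsilon\lVert\mathbf s\rVert_1$ term. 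For the monotonicity in $\bar{\mathbf s}_{-i}$, the bracket $\Delta_{\ell_i}(\mathbf s_{-i})$ is non-increasing in $\mathbf s_{-i}$ by the SOS property of $v_i$, while each marginal $\mu(\bar s_j)$ is stochastically non-decreasing in $\bar s_j$ (weight shifts from $\ell_j$ to $\ell_j+1$ within a piece, and one transits continuously through Dirac masses at points of $\mathcal{C}$ across pieces), so the expectation of the non-increasing bracket is non-increasing in $\bar{\mathbf s}_{-i}$.

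For the monotonicity in $\ell_i$, which is where the exponential spacing is used, note $\Delta_{\ell_i} \ge \epsilon$ by weak monotonicity of $v_i$ and $\Delta_{\ell_i+1} \le \max_{\mathbf s} v_{(1)}(\mathbf s) + \epsilon$. The choice $c = \lceil \max_{\mathbf s} v_{(1)}(\mathbf s)/\epsilon \rceil + 1$ gives $c\,\epsilon \ge \max_{\mathbf s} v_{(1)}(\mathbf s) + \epsilon$, so
\[
  D_i(\bar{\mathbf s}_{-i},\ell_i+1) \le \frac{\max_{\mathbf s} v_{(1)}(\mathbf s)+\epsilon}{c^{\ell_i+2}} \le \frac{\epsilon}{c^{\ell_i+1}} \le D_i(\bar{\mathbf s}_{-i},\ell_i),
\]
and the piecewise-constant slope $t \mapsto D_i^{\mathrm{loc}}(\bar{\mathbf s}_{-i},t) := D_i(\bar{\mathbf s}_{-i},\ell(t))$ is non-increasing in $t$. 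The main obstacle is the careful tuning of $c$ and the role of the $\epsilon\lVert\mathbf s\rVert_1$ perturbation: without the additive $\epsilon$ the slope could vanish on a piece where $v_i$ is locally flat, and the cardinal comparison across adjacent pieces would fail; with it, the exponential gap $c^{\ell_i+2}/c^{\ell_i+1} = c$ is exactly calibrated to dominate the worst-case ratio $(\max_{\mathbf s} v_{(1)}(\mathbf s)+\epsilon)/\epsilon$ between the numerators of two adjacent slopes.
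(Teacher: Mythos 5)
Your proposal is correct and follows essentially the same route as the paper. The paper likewise splits the verification into (i) showing $\bar v_i$ is SOS---your ``slope non-increasing in $\bar{\mathbf s}_{-i}$,'' proved there by the coordinate-by-coordinate telescoping of Claim~\ref{claim:strong-sos} rather than your cleaner stochastic-dominance phrasing---and (ii) showing $\bar v_i(\bar{\mathbf s}_{-j},\cdot)$ is convex in the own signal---your ``slope non-increasing in $t$,'' proved by exactly the same calibration chain $\frac{\max_{\mathbf s} v_{(1)}(\mathbf s)+\epsilon}{c^{\ell+2}} \le \frac{\epsilon}{c^{\ell+1}}$---and then combines the two inequalities just as you do by shifting the window and then the background signals.
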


\begin{proof}

    We will first show that $\bar{v}_i$ is an SOS valuation function. 
    Next, we will show that for any $\bar{\mathbf{s}}_{-j}$, $\bar{v}_i(\bar{\mathbf{s}}_{-j}, \bar{s}_j)$ is a convex sequence in $\bar{s}_j$, i.e., $\bar{v}_i(\bar{\mathbf{s}}_{-j}, \bar{s}_j + 1) - \bar{v}_i(\bar{\mathbf{s}}_{-j}, \bar{s}_j) \leq \bar{v}_i(\bar{\mathbf{s}}_{-j}, \bar{s}_j) - \bar{v}_i(\bar{\mathbf{s}}_{-j}, \bar{s}_j - 1)$. This implies that for any $s'_j \leq s_j$ and $\delta \geq 0$, $\bar{v}_i(\bar{\mathbf{s}}_{-j}, \bar{s}_j + \delta) - \bar{v}_i(\bar{\mathbf{s}}_{-j}, \bar{s}_j) \leq \bar{v}_i(\bar{\mathbf{s}}_{-j}, \bar{s}'_j + \delta) - \bar{v}_i(\bar{\mathbf{s}}_{-j}, \bar{s}'_j)$. By combining these two results, we can conclude that $\bar{v}_i$ is strong-SOS.
    
    We start by proving a useful claim.
    
    \begin{claim} \label{claim:strong-sos}
    For any $\ell \in [\mathcal{S}-1]$ and $\bar{s}'_j \leq \bar{s}_j$,
    $$\bar{v}_i(\bar{\mathbf{s}}_{-jk}, \bar{s}_j, c_{\ell+1}) - \bar{v}_i(\bar{\mathbf{s}}_{-jk}, \bar{s}_j, c_{\ell}) \leq \bar{v}_i(\bar{\mathbf{s}}_{-jk}, \bar{s}'_j, c_{\ell+1}) - \bar{v}_i(\bar{\mathbf{s}}_{-jk}, \bar{s}'_j, c_{\ell})$$
    \end{claim}

    \begin{proof}
    If $r(\bar{s}_j) - 1 = \ell(\bar{s}'_j)$,
    
    \begin{align*}
        & \left( \bar{v}_i(\bar{\mathbf{s}}_{-jk}, \bar{s}_j, c_{\ell+1}) - \bar{v}_i(\bar{\mathbf{s}}_{-jk}, \bar{s}_j, c_{\ell}) \right) - \left( \bar{v}_i(\bar{\mathbf{s}}_{-jk}, \bar{s}'_j, c_{\ell+1}) - \bar{v}_i(\bar{\mathbf{s}}_{-jk}, \bar{s}'_j, c_{\ell}) \right) \\
        =& \frac{\bar{s}_j - \bar{s}'_j}{c_{\ell(\bar{s}^\prime_j) + 1} - c_{\ell(\bar{s}^\prime_j)}} \left( \bar{v}_i(\bar{\mathbf{s}}_{-jk}, c_{\ell(\bar{s}_j)+1}, c_{\ell+1}) - \bar{v}_i(\bar{\mathbf{s}}_{-jk}, c_{\ell(\bar{s}_j)+1}, c_{\ell}) \right) + \\
        & \frac{\bar{s}'_j - \bar{s}_j}{c_{\ell(\bar{s}^\prime_j) + 1} - c_{\ell(\bar{s}^\prime_j)}} \left( \bar{v}_i(\bar{\mathbf{s}}_{-jk}, c_{\ell(\bar{s}_j)}, c_{\ell+1}) - \bar{v}_i(\bar{\mathbf{s}}_{-jk}, c_{\ell(\bar{s}_j)}, c_{\ell}) \right) \\
        =& \frac{\bar{s}_j - \bar{s}'_j}{c_{\ell(\bar{s}^\prime_j) + 1} - c_{\ell(\bar{s}^\prime_j)}} \left[ \left( \bar{v}_i(\bar{\mathbf{s}}_{-jk}, c_{\ell(\bar{s}_j)+1}, c_{\ell+1}) - \bar{v}_i(\bar{\mathbf{s}}_{-jk}, c_{\ell(\bar{s}_j)+1}, c_{\ell}) \right) - \right. \\
        & \quad \quad \quad \quad \quad \quad \; \, \, \left. \left( \bar{v}_i(\bar{\mathbf{s}}_{-jk}, c_{\ell(\bar{s}_j)}, c_{\ell+1}) - \bar{v}_i(\bar{\mathbf{s}}_{-jk}, c_{\ell(\bar{s}_j)}, c_{\ell}) \right) \right] \\
        =& \frac{\bar{s}_j - \bar{s}'_j}{c_{\ell(\bar{s}^\prime_j) + 1} - c_{\ell(\bar{s}^\prime_j)}} \mathbb{E}_{\mathbf{s}_{-jk} \sim \mu(\bar{\mathbf{s}}_{-jk})} \left[ \left( v_i(\mathbf{s}_{-jk}, \ell(\bar{s}_j)+1, \ell+1) - v_i(\mathbf{s}_{-jk}, \ell(\bar{s}_j)+1, \ell) + \epsilon \right) - \right. \\
        & \quad \quad \quad \quad \quad \quad \quad \quad \quad \quad \quad \; \left. \left( v_i(\mathbf{s}_{-jk}, \ell(\bar{s}_j), \ell+1) - v_i(\mathbf{s}_{-jk}, \ell(\bar{s}_j), \ell) + \epsilon \right) \right] \\
        \leq& 0
    \end{align*}
    
    $\frac{\bar{s}_j - \bar{s}'_j}{c_{\ell(\bar{s}_j) + 1} - c_{\ell(\bar{s}_j)}} \geq 0$ since $\bar{s}'_j \leq \bar{s}_j$. For each $\mathbf{s}_{-jk} \sim \mu(\bar{\mathbf{s}}_{-jk})$, since $v_i$ is SOS, $v_i(\mathbf{s}_{-jk}, \ell(\bar{s}_j)+1, \ell+1) - v_i(\mathbf{s}_{-jk}, \ell(\bar{s}_j)+1, \ell) \leq v_i(\mathbf{s}_{-jk}, \ell(\bar{s}_j), \ell+1) - v_i(\mathbf{s}_{-jk}, \ell(\bar{s}_j), \ell)$. By linearity of expectation, the second part of equation is negative.

    If $r(\bar{s}_j) - 1 > \ell(\bar{s}'_j)$,
    \begin{align*}
        &\bar{v}_i(\bar{\mathbf{s}}_{-jk}, \bar{s}_j, c_{\ell+1}) - \bar{v}_i(\bar{\mathbf{s}}_{-jk}, \bar{s}_j, c_{\ell}) \\
        \leq& \bar{v}_i(\bar{\mathbf{s}}_{-jk}, c_{\ell(\bar{s}_j)}, c_{\ell+1}) - \bar{v}_i(\bar{\mathbf{s}}_{-jk}, c_{\ell(\bar{s}_j)}, c_{\ell}) \\
        \leq & \bar{v}_i(\bar{\mathbf{s}}_{-jk}, c_{\ell(\bar{s}_j) - 1}, c_{\ell+1}) - \bar{v}_i(\bar{\mathbf{s}}_{-jk}, c_{\ell(\bar{s}_j) - 1}, c_{\ell})\\
        \leq& ...\\
        \leq& \bar{v}_i(\bar{\mathbf{s}}_{-jk}, c_{\ell(\bar{s}^\prime_j)+1}, c_{\ell+1}) - \bar{v}_i(\bar{\mathbf{s}}_{-jk}, c_{\ell(\bar{s}^\prime_j)+1}, c_{\ell})\\
        \leq& \bar{v}_i(\bar{\mathbf{s}}_{-jk}, \bar{s}'_{j}, c_{\ell+1}) - \bar{v}_i(\bar{\mathbf{s}}_{-jk}, , \bar{s}'_{j}, c_{\ell})
    \end{align*}
    \end{proof}
    
    With the result from Claim \ref{claim:strong-sos}, we are ready to show that $\bar{v}_i$ is SOS.
    
    \begin{lemma}
        For any $\bar{\mathbf{s}}_{-j} \in \bar{\mathcal{S}}_{-j}$,  $s^\prime_j \leq s_j$ and $\delta \geq 0$, 
        $$\bar{v}_i(\bar{\mathbf{s}}_{-j}, \bar{s}_j + \delta) -  \bar{v}_i(\bar{\mathbf{s}}_{-j}, \bar{s}_j) \leq \bar{v}_i(\bar{\mathbf{s}}^\prime_{-j}, \bar{s}_j +\delta) - \bar{v}_i(\bar{\mathbf{s}}^\prime_{-j}, \bar{s}_j)$$
    \end{lemma}
    
    \begin{proof}
    
        We will first show that $\bar{v}_i$ is SOS when $\ell(\bar{s}_i) = \ell(\bar{s}_i + \delta)$, and then prove the more generalized result. Specifically, let us define $\ell = \ell(\bar{s}_i) = \ell(\bar{s}_i + \delta)$.

        \begin{align*}
        &\bar{v}_i(\bar{\mathbf{s}}_{-j}, \bar{s}_j + \delta) -  \bar{v}_i(\bar{\mathbf{s}}_{-j}, \bar{s}_{j}) \\
        =& \frac{\bar{s}_j + \delta - c_{\ell}}{c_{\ell+1} - c_{\ell}} \bar{v}_i(\bar{\mathbf{s}}_{-j}, c_{\ell+1}) + \frac{c_{\ell+1} - (\bar{s}_j + \delta)}{c_{\ell+1} - c_{\ell}} \bar{v}_i(\bar{\mathbf{s}}_{-j}, c_{\ell}) - \frac{\bar{s}_j - c_{\ell}}{c_{\ell+1} - c_{\ell}} \bar{v}_i(\bar{\mathbf{s}}_{-j}, c_{\ell+1}) - \frac{c_{\ell+1} - \bar{s}_j}{c_{\ell+1} - c_{\ell}} \bar{v}_i(\bar{\mathbf{s}}_{-j}, c_{\ell}) \\
        =& \frac{\delta}{c_{\ell+1} - c_{\ell}} \left( \bar{v}_i(\bar{\mathbf{s}}_{-j},c_{\ell+1}) -\bar{v}_i(\bar{\mathbf{s}}_{-j}, c_{\ell}) \right)\\
        \leq& \frac{\delta}{c_{\ell+1} - c_{\ell}} \left( \bar{v}_i(\bar{\mathbf{s}}_{-\{1,j\}}, \bar{s}'_1, c_{\ell+1}) -\bar{v}_i(\bar{\mathbf{s}}_{-\{1,j\}}, \bar{s}'_1, c_{\ell}) \right) \\
        \leq& \frac{\delta}{c_{\ell+1} - c_{\ell}} \left( \bar{v}_i(\bar{\mathbf{s}}_{-\{1,2, j\}}, \bar{s}'_1, \bar{s}'_2, c_{\ell+1}) -\bar{v}_i(\bar{\mathbf{s}}_{-\{1,2, j\}}, \bar{s}'_1, \bar{s}'_2, c_{\ell}) \right) \\
        \leq& \cdots \\
        \leq& \frac{\delta}{c_{\ell+1} - c_{\ell}} \left( \bar{v}_i(\bar{\mathbf{s}}'_{-j},c_{\ell+1}) -\bar{v}_i(\bar{\mathbf{s}}'_{-j}, c_{\ell}) \right) \\
        =& \frac{\bar{s}_j + \delta - c_{\ell}}{c_{\ell+1} - c_{\ell}} \bar{v}_i(\bar{\mathbf{s}}'_{-j}, c_{\ell+1}) + \frac{c_{\ell+1} - (\bar{s}_j + \delta)}{c_{\ell+1} - c_{\ell}} \bar{v}_i(\bar{\mathbf{s}}'_{-j}, c_{\ell}) - \frac{\bar{s}_j - c_{\ell}}{c_{\ell+1} - c_{\ell}} \bar{v}_i(\bar{\mathbf{s}}'_{-j}, c_{\ell+1}) - \frac{c_{\ell+1} - \bar{s}_j}{c_{\ell+1} - c_{\ell}} \bar{v}_i(\bar{\mathbf{s}}'_{-j}, c_{\ell}) \\
        =& \bar{v}_i(\bar{\mathbf{s}}^\prime_{-j}, \bar{s}_j +\delta) - \bar{v}_i(\bar{\mathbf{s}}^\prime_{-j}, \bar{s}_j)
    \end{align*}
    The inequality above is from Claim \ref{claim:strong-sos}.
    
    We will next show that $\bar{v}_i$ is SOS when $\ell(\bar{s}_j)$ and $\ell(\bar{s}_j + \delta)$ could possibly be different, i.e., $\ell(\bar{s}_j) \leq \ell(\bar{s}_j + \delta)$.
    \begin{align*}
        &\bar{v}_i(\bar{\mathbf{s}}_{-j}, \bar{s}_j + \delta) -  \bar{v}_i(\bar{\mathbf{s}}_{-j},\bar{s}_{j})\\
        =& \left(\bar{v}_i(\bar{\mathbf{s}}_{-j}, c_{\ell(\bar{s}_j)+1}) - \bar{v}_i(\bar{\mathbf{s}}_{-j}, \bar{s}_{j}) \right) + \left(\bar{v}_i(\bar{\mathbf{s}}_{-j}, c_{\ell(\bar{s}_j)+2}) - \bar{v}_i(\bar{\mathbf{s}}_{-j}, c_{\ell(\bar{s}_j)+1}) \right) + \cdots + \\
        &\left(\bar{v}_i(\bar{\mathbf{s}}_{-j}, \bar{s}_j + \delta) - \bar{v}_i(\bar{\mathbf{s}}_{-j}, c_{\ell(\bar{s}_j + \delta)}) \right) \\
        \leq& \left(\bar{v}_i(\bar{\mathbf{s}}'_{-j}, c_{\ell(\bar{s}_j)+1}) - \bar{v}_i(\bar{\mathbf{s}}_{-j}, \bar{s}_{j}) \right) + \left(\bar{v}_i(\bar{\mathbf{s}}'_{-j}, c_{\ell(\bar{s}_j)+2}) - \bar{v}_i(\bar{\mathbf{s}}'_{-j}, c_{\ell(\bar{s}_j)+1}) \right) + \cdots + \\
        &\left(\bar{v}_i(\bar{\mathbf{s}}'_{-j}, \bar{s}_j + \delta) - \bar{v}_i(\bar{\mathbf{s}}'_{-j}, c_{\ell(\bar{s}_j + \delta)}) \right) \\
        =& \bar{v}_i(\bar{\mathbf{s}}'_{-j}, \bar{s}_j + \delta) - \bar{v}_i(\bar{\mathbf{s}}'_{-j}, \bar{s}_j)
    \end{align*}
    \end{proof}
    
    Next, we will show that for any signal profile $\bar{\mathbf{s}}$, $\bar{v}_i(\bar{\mathbf{s}}_{-j}, \bar{s}_j + 1) - \bar{v}_i(\bar{\mathbf{s}}_{-j}, \bar{s}_j) \leq \bar{v}_i(\bar{\mathbf{s}}_{-j}, \bar{s}_j) - \bar{v}_i(\bar{\mathbf{s}}_{-j}, \bar{s}_j - 1)$.
    
    When $\ell(\bar{s}_j - 1) = \ell(\bar{s}_j) = \ell$, $\bar{v}_i(\bar{\mathbf{s}}_{-j}, \bar{s}_j + 1) - \bar{v}_i(\bar{\mathbf{s}}_{-j}, \bar{s}_j)$ and $\bar{v}_i(\bar{\mathbf{s}}_{-j}, \bar{s}_j) - \bar{v}_i(\bar{\mathbf{s}}_{-j}, \bar{s}_j - 1)$ both can be rewritten as $\frac{1}{c_{\ell + 1} - c_{\ell}} \left( \bar{v}_i(\bar{\mathbf{s}}_{-j}, c_{\ell + 1}) -  \bar{v}_i(\bar{\mathbf{s}}_{-j}, c_{\ell}) \right)$ and thus, $$\bar{v}_i(\bar{\mathbf{s}}_{-j}, \bar{s}_j + 1) - \bar{v}_i(\bar{\mathbf{s}}_{-j}, \bar{s}_j) = \bar{v}_i(\bar{\mathbf{s}}_{-j}, \bar{s}_j) - \bar{v}_i(\bar{\mathbf{s}}_{-j}, \bar{s}_j - 1)$$
    
    Otherwise, $\ell(\bar{s}_j - 1) < \ell(\bar{s}_j)$ implies that $\ell(\bar{s}_j - 1) = \ell(\bar{s}_j) - 1$. $\bar{s}_j \in \mathcal{C}$. 
    
    \begin{align*}
        &\bar{v}_i(\bar{\mathbf{s}}_{-j}, \bar{s}_j + 1) - \bar{v}_i(\bar{\mathbf{s}}_{-j}, \bar{s}_j) \\
        =& \frac{1}{c_{\ell(\bar{s}_j) + 1} - c_{\ell(\bar{s}_j)}} \left( \bar{v}_i(\bar{\mathbf{s}}_{-j}, c_{\ell(\bar{s}_j)+1}) - \bar{v}_i(\bar{\mathbf{s}}_{-j}, c_{\ell(\bar{s}_j)}) + \epsilon \right) \\
        =& \frac{1}{c^{\ell(\bar{s}_j)}} \left( \mathbb{E}_{\mathbf{s}_{-j} \sim \mu(\bar{\mathbf{s}}_{-j})} \left( v_i(\mathbf{s}_{-j}, \ell(\bar{s}_j)+1) - v_i(\mathbf{s}_{-j}, \ell(\bar{s}_j)) \right) + \epsilon \right) \\
        \leq&  \frac{1}{c^{\ell(\bar{s}_j)}} \left( \max_{s \in \mathcal{S}} v_{(1)}(\mathbf{s}) + \epsilon \right) \\
        \leq& \frac{\epsilon}{c^{\ell(\bar{s}_j)-1}} \\
        \leq& \frac{1}{c^{\ell(\bar{s}_j)-1}} \left( \mathbb{E}_{\mathbf{s}_{-j} \sim \mu(\bar{\mathbf{s}}_{-j})} \left( v_i(\mathbf{s}_{-j}, \ell(\bar{s}_j)) - v_i(\mathbf{s}_{-j}, \ell(\bar{s}_j) - 1) \right) + \epsilon \right) \\
        =& \frac{1}{c_{\ell(\bar{s}_j - 1) + 1} - c_{\ell(\bar{s}_j - 1)}} \left( \bar{v}_i(\bar{\mathbf{s}}_{-j}, c_{\ell(\bar{s}_j)}) - \bar{v}_i(\bar{\mathbf{s}}_{-j}, c_{\ell(\bar{s}_j) - 1}) + \epsilon \right) \\
        =& \bar{v}_i(\bar{\mathbf{s}}_{-j}, \bar{s}_j) - \bar{v}_i(\bar{\mathbf{s}}_{-j}, \bar{s}_j - 1)
    \end{align*}

Therefore, $v_i(\mathbf{s}_{-j}, \mathbf{s}_j)$ forms a convex sequence on $\mathbf{s}_j$. We can then conclude that for any $\delta > 0$, 

$$\bar{v}_i(\bar{\mathbf{s}}_{-j}, \bar{s}_j +\delta) - \bar{v}_i(\bar{\mathbf{s}}_{-j}, \bar{s}_{j})
\leq \bar{v}_i(\bar{\mathbf{s}}_{-j}, \bar{s}'_j +\delta) - \bar{v}_i(\bar{\mathbf{s}}_{-j}, \bar{s}'_{j})$$

Combining the two results we conclude our proof.
$$\bar{v}_i(\bar{\mathbf{s}}_{-j}, \bar{s}_j +\delta) - \bar{v}_i(\bar{\mathbf{s}}_{-j}, \bar{s}_{j})
\leq \bar{v}_i(\bar{\mathbf{s}}_{-j}, \bar{s}'_j +\delta) - \bar{v}_i(\bar{\mathbf{s}}'_{-j}, \bar{s}_{j}) \leq \bar{v}_i(\bar{\mathbf{s}}'_{-j}, \bar{s}'_j +\delta) - \bar{v}_i(\bar{\mathbf{s}}'_{-j}, \bar{s}'_{j})$$
\end{proof}


Finally, we show our reduction and prove the approximation result. Suppose mechanism $\bar{\mathcal{M}} = (\bar{\mathbf{x}}, \bar{p})$ is ex-post IC \& IR and achieves $\alpha$-approximation on any strong-SOS valuation function. We will show that there exists a monotone allocation rule $x$ such that $x$ achieves $\alpha(1 - O(\epsilon))$-approximation on any SOS valuation setting. $x(\mathbf{s})$ simply takes the allocation rule of signal $c_{\mathbf{s}} \in \bar{\mathcal{S}}$, i.e., $\mathbf{x}(\mathbf{s}) = \bar{\mathbf{x}}(c_{\mathbf{s}})$. The monotonicity of $\bar{\mathbf{x}}(\bar{\mathbf{s}})$ directly implies that $x(\mathbf{s})$ is monotone: for any bidder $i \in [n]$, signals $\mathbf{s}_{-i}$ and $s_i \geq s'_i$,
$$x_i(\mathbf{s}_{-i}, s_i) = \bar{x}(c_{s_1}, c_{s_2}, \cdots, c_{s_i}, \cdots, c_{s_n}) \geq \bar{x}(c_{s_1}, c_{s_2}, \cdots, c_{s'_i}, \cdots, c_{s_n}) = x_i(\mathbf{s}_{-i},s'_i)$$

Since $\bar{\mathbf{x}}$ achieves $\alpha$-approximation on strong-SOS valuation functions $\{\bar{v}_i\}_{i \in [n]}$, we have 

$$\min_{\mathbf{s} \in \mathcal{S}} \frac{\sum_{i} x_i(\mathbf{s}) \cdot (v_i(\mathbf{s}) + \epsilon \cdot \lVert \mathbf{s} \rVert_1)}{v_{(1)}(\mathbf{s}) + \epsilon \cdot \lVert \mathbf{s} \rVert_1} 
= \min_{\mathbf{s} \in \mathcal{S}} \frac{\sum_{i} \bar{x}_i(c_{\mathbf{s}}) \cdot \bar{v}_i(c_{\mathbf{s}})}{\bar{v}_{(1)}(c_{\mathbf{s}})} 
\geq \min_{\bar{\mathbf{s}} \in \bar{\mathcal{S}}} \frac{\sum_i \bar{x}_i(\bar{\mathbf{s}}) \cdot \bar{v}_i(\bar{\mathbf{s}})}{\bar{v}_{(1)}(\bar{\mathbf{s}})}
\geq \alpha$$

Based on our construction that $\bar{v}_i(\bar{\mathbf{s}}) = \mathbb{E}_{\mathbf{s} \sim \mu(\bar{\mathbf{s}})} \left[ v_i(\mathbf{s}) + \epsilon \cdot \lVert \mathbf{s} \rVert_1 \right]$,
\begin{align*}
    &\min_{\mathbf{s} \in \mathcal{S}} \frac{\sum_{i} x_i(\mathbf{s}) \cdot v_i(\mathbf{s})}{v_{(1)}(\mathbf{s})} \\ 
    \geq& \min_{\mathbf{s} \in \mathcal{S}} \frac{\sum_{i} x_i(\mathbf{s}) \cdot v_i(\mathbf{s})}{v_{(1)}(\mathbf{s}) + \epsilon \cdot \lVert \mathbf{s} \rVert_1} \\
    \geq& \min_{\mathbf{s} \in \mathcal{S}} \frac{\sum_{i} x_i(\mathbf{s}) \cdot (v_i(\mathbf{s}) + \epsilon \cdot \lVert \mathbf{s} \rVert_1)}{v_{(1)}(\mathbf{s}) + \epsilon \cdot \lVert \mathbf{s} \rVert_1} \cdot \min_{\mathbf{s} \in \mathcal{S}} \frac{\sum_{i} x_i(\mathbf{s}) \cdot v_i(\mathbf{s})}{\sum_{i} x_i(\mathbf{s}) \cdot (v_i(\mathbf{s}) + \epsilon \cdot \lVert \mathbf{s} \rVert_1)} \\
    \geq& \alpha \cdot \min_{\mathbf{s} \in \mathcal{S}, i \in [n]} \frac{v_i(\mathbf{s})}{v_i(\mathbf{s}) + \epsilon \cdot \lVert \mathbf{s} \rVert_1} \\
    \geq& \alpha \cdot \frac{1}{1 + \epsilon \cdot \frac{\max_{\mathbf{s} \in \mathcal{S}} \lVert \mathbf{s} \rVert_1}{\min_{\mathbf{s} \in \mathcal{S}} v_{(1)}(\mathbf{s})}} \\
    =& \alpha \cdot \left( 1 - \frac{\max_{\mathbf{s} \in \mathcal{S}}}{\min_{\mathbf{s} \in {\mathcal{S}}} v_{(1)}(\mathbf{s})} \cdot \epsilon \right)
\end{align*}

Therefore, $x$ has $(1-O(\epsilon)) \alpha$-approximation on $\{v_i(\mathbf{s})\}_{i \in [n]}$

\end{proof}


\bibliographystyle{unsrt}  
\bibliography{reference}

\begin{thebibliography}{10}

\bibitem{milgrom1982competitive}
Paul~R. Milgrom and Robert~J. Weber.
\newblock A theory of auctions and competitive bidding.
\newblock {\em Econometrica}, 50(5):1089--1122, 1982.

\bibitem{Chawla2014approximate}
Shuchi Chawla, Hu~Fu, and Anna Karlin.
\newblock Approximate revenue maximization in interdependent value settings.
\newblock EC '14, 2014.

\bibitem{Eden2018withoutsinglecrossing}
Alon Eden, Michal Feldman, Amos Fiat, and Kira Goldner.
\newblock Interdependent values without single-crossing.
\newblock EC '18, 2018.

\bibitem{Eden2019SOS}
Alon Eden, Michal Feldman, Amos Fiat, Kira Goldner, and Anna~R. Karlin.
\newblock Combinatorial auctions with interdependent valuations: Sos to the
  rescue.
\newblock EC '19, 2019.

\bibitem{Dasgupta2000efficient}
Partha Dasgupta and Eric Maskin.
\newblock {Efficient Auctions}.
\newblock {\em The Quarterly Journal of Economics}, 115(2):341--388, 2000.

\bibitem{Roughgarden2016optimal}
Tim Roughgarden and Inbal Talgam-Cohen.
\newblock Optimal and robust mechanism design with interdependent values.
\newblock {\em ACM Trans. Econ. Comput.}, 4(3), jun 2016.

\bibitem{Eden2021poa}
Alon Eden, Michal Feldman, Inbal Talgam-Cohen, and Ori Zviran.
\newblock Poa of simple auctions with interdependent values.
\newblock {\em Proceedings of the AAAI Conference on Artificial Intelligence},
  35(6):5321--5329, May 2021.

\bibitem{Takayuki2006contingent}
Takayuki Ito and David~C. Parkes.
\newblock Instantiating the contingent bids model of truthful interdependent
  value auctions.
\newblock AAMAS '06, page 1151–1158, 2006.

\bibitem{Gkatzelis2021priorfree}
Vasilis Gkatzelis, Rishi Patel, Emmanouil Pountourakis, and Daniel Schoepflin.
\newblock Prior-free clock auctions for bidders with interdependent values,
  2021.

\bibitem{Jehiel2001efficient}
Philippe Jehiel and Benny Moldovanu.
\newblock Efficient design with interdependent valuations.
\newblock {\em Econometrica}, 69(5):1237--1259, 2001.

\bibitem{Vickrey1961}
William Vickrey.
\newblock Counterspeculation, auctions, and competitive sealed tenders.
\newblock {\em The Journal of Finance}, 16(1):8--37, 1961.

\bibitem{Clarke1971}
Edward~H. Clarke.
\newblock Multipart pricing of public goods.
\newblock {\em Public Choice}, 11:17--33, 1971.

\bibitem{Groves1973}
Theodore Groves.
\newblock Incentives in teams.
\newblock {\em Econometrica}, 41(4):617--631, 1973.

\bibitem{McLean2015implementation}
Richard McLean and Andrew Postlewaite.
\newblock {Implementation with interdependent valuations}.
\newblock {\em Theoretical Economics}, 10(3), September 2015.

\bibitem{Ausubel2000vickrey}
Lawrence Ausubel.
\newblock {A Generalized Vickrey Auction}.
\newblock Econometric Society World Congress 2000 Contributed Papers 1257,
  Econometric Society, August 2000.

\bibitem{Athey2001singlecrossing}
Susan Athey.
\newblock Single crossing properties and the existence of pure strategy
  equilibria in games of incomplete information.
\newblock {\em Econometrica}, 69(4):861--889, 2001.

\bibitem{Vohra2011MechanismDA}
Rakesh~V. Vohra.
\newblock Mechanism design: A linear programming approach.
\newblock 2011.

\bibitem{Li2013approximation}
Yunan Li.
\newblock Approximation in mechanism design with interdependent values.
\newblock EC '13, 2013.

\bibitem{Che2015EfficientAW}
Yeon-Koo Che, Jinwoo Kim, and Fuhito Kojima.
\newblock Efficient assignment with interdependent values.
\newblock {\em J. Econ. Theory}, 158:54--86, 2015.

\bibitem{Maskin1992privatization}
Eric Maskin.
\newblock {\em Auctions and Privatization}, pages 115--136.
\newblock J.C.B. Mohr Publisher, 1992.

\bibitem{Amer2021ImprovedSOS}
Ameer Amer and Inbal Talgam-Cohen.
\newblock Auctions with interdependence and sos: Improved approximation.
\newblock In {\em SAGT}, 2021.

\bibitem{Eden22private}
Alon Eden, Kira Goldner, and Shuran Zheng.
\newblock Private interdependent valuations.
\newblock Proceedings of the 2022 Annual ACM-SIAM Symposium on Discrete
  Algorithms (SODA), pages 2920--2939.

\end{thebibliography}

\end{document}